\newcommand{\bbN}{\mathbb{N}}
\newcommand{\bbR}{\mathbb{R}}
\newcommand{\bbZ}{\mathbb{Z}}
\newcommand{\bfzero}{\mathbf{0}}
\newcommand{\bfone}{\mathbf{1}}
\newcommand{\biDelta}{\boldsymbol{\Delta}}
\newcommand{\bic}{\boldsymbol{c}}
\newcommand{\bid}{\boldsymbol{d}}
\newcommand{\bip}{\boldsymbol{p}}
\newcommand{\biu}{\boldsymbol{u}}
\newcommand{\bix}{\boldsymbol{x}}
\newcommand{\biy}{\boldsymbol{y}}
\newcommand{\biz}{\boldsymbol{z}}
\newcommand{\biw}{\boldsymbol{w}}
\newcommand{\bichi}{\boldsymbol{\chi}}
\newcommand{\caD}{\mathcal{D}}
\newcommand{\E}{\mathop{\mathbf{E}}}
\newcommand{\bit}{\{0,1\}}
\newcommand{\OPT}{\mathrm{OPT}}
\newcommand{\supp}{\mathrm{supp}}
\newcommand{\argmax}{\mathrm{argmax}}
\newcommand{\wmin}{w_{\min}}
\newcommand{\compknap}{O\left(\frac{n^3}{\epsilon^3}\log^3\tau \left[ \log^3\|\bic\|_\infty + \frac{n}{\epsilon}\log\|\bic\|_\infty \log\frac{1}{\epsilon\wmin} \right]\right)}
\title{Maximizing Monotone Submodular Functions over the Integer Lattice\thanks{T.S. is supported by JSPS Grant-in-Aid for JSPS Fellows. Y.Y. is supported by JSPS Grant-in-Aid for Young Scientists (B) (No.~26730009), MEXT Grant-in-Aid for Scientific Research on Innovative Areas (No.~24106003), and JST, ERATO, Kawarabayashi Large Graph Project.}
}
\titlerunning{Maximizing Monotone Submodular Functions over $\bbZ_+^E$}
\author{Tasuku Soma         \and
        Yuichi Yoshida 
}
\institute{T. Soma \at
              Graduate School of Information Science and Technology, the University of Tokyo\\
             \email{tasuku\_soma@mist.i.u-tokyo.ac.jp}
           \and
           Y. Yoshida \at
    National Institute of Informatics and Preferred Infrastructure, Inc. \\
     \email{yyoshida@nii.ac.jp}
}
\date{Received: date / Accepted: date}
\begin{document}


\maketitle
\begin{abstract}
  The problem of maximizing non-negative monotone submodular functions under a certain constraint has been intensively studied in the last decade.
  In this paper, we address the problem for functions defined over the integer lattice.

  Suppose that a non-negative monotone submodular function $f:\bbZ_+^n \to \bbR_+$ is given via an evaluation oracle.
  Assume further that $f$ satisfies the diminishing return property, which is not an immediate consequence of submodularity when the domain is the integer lattice.
  Given this, we design polynomial-time $(1-1/e-\epsilon)$-approximation algorithms for a cardinality constraint, a polymatroid constraint, and a knapsack constraint.
  For a cardinality constraint, we also provide a $(1-1/e-\epsilon)$-approximation algorithm with slightly worse time complexity that does not rely on the diminishing return property.

\keywords{submodular functions \and integer lattice \and DR-submodular functions}
\end{abstract}

\section{Introduction}\label{sec:intro}

Submodular functions have been intensively studied in various areas of operations research and computer science, as submodularity naturally arises in many problems in these fields~\cite{Fujishige2005,Iwata2007survey,Krause2014survey}.
In the last decade, the \emph{maximization} of submodular functions in particular has attracted interest.
For example, one can find novel applications of submodular function maximization in the dissemination of influence through social networks~\cite{Kempe2003}, text summarization~\cite{Lin2010,Lin2011}, and optimal budget allocation for advertisements~\cite{Alon2012}.

Most past works in the area have considered submodular functions defined over a set---submodular functions which take a subset of a ground set as the input and return a real value.
However, in many practical scenarios, it is more natural to consider submodular functions over a multiset or, equivalently, submodular functions over the integer lattice $\bbZ^E$ for some finite set $E$.
We say that a function $f:\bbZ^E \to \bbR$ is \emph{(lattice) submodular} if $f(\bix) + f(\biy) \geq f(\bix\vee\biy) + f(\bix\wedge\biy)$ for all $\bix, \biy \in \bbZ^E$, where $\bix\vee \biy$ and $\bix\wedge \biy$ denote the coordinate-wise maximum and minimum, respectively.
Such a generalized form of submodularity arises in maximizing the spread of influence with partial incentives~\cite{Demaine:2014dw},
optimal budget allocation, sensor placement, and text summarization~\cite{Soma:2014tp}.

When designing algorithms for maximizing submodular functions,
the \emph{diminishing return property} often plays a crucial role.
A set function $f:2^E \to \bbR$ is said to satisfy the diminishing return property if $f(X+e) - f(X) \geq f(Y+e) - f(Y)$ for all $X \subseteq Y \subseteq E$ and $e \notin Y$.
For example, the simple greedy algorithm for cardinality constraints proposed by Nemhauser~\emph{et~al.}~\cite{Nemhauser1978} works because of this property.
For set functions,
it is well-known that submodularity is equivalent to the diminishing return property.
For functions over the integer lattice,
however,
lattice submodularity only implies a weaker variant of the inequality.
This causes difficulty in designing approximation algorithms; even for a single cardinality constraint, we need a more complicated approach such as partial enumeration~\cite{Alon2012,Soma:2014tp}.

Fortunately, objective functions appearing in practical applications admit the diminishing return property in the following sense.
We say that a function $f:\bbZ^E \to \bbR$ is \emph{diminishing return submodular} (\emph{DR-submodular}) if $f(\bix+\bichi_e)-f(\bix) \geq f(\biy+\bichi_e)-f(\biy)$ for arbitrary $\bix\leq\biy$ and $e\in E$, where $\bichi_e$ is the $i$-th unit vector.
Any DR-submodular function is lattice submodular; i.e., DR-submodularity is stronger than lattice submodularity.\footnote{Note that $f$ is DR-submodular if and only if it is lattice submodular and satisfies the \emph{coordinate-wise concave condition}: $f(\bix+\bichi_e) - f(\bix) \geq f(\bix+2\bichi_e) - f(\bix+\bichi_e)$ for any $\bix$ and $e \in E$ (see~\cite[Lemma 2.3]{sfcover2015}).}
The problem of maximizing DR-submodular functions over $\bbZ^E$ naturally appears in the submodular welfare problem~\cite{Kapralov2012,Shioura2009} and the budget allocation problem with decreasing influence probabilities~\cite{Soma:2014tp}.
Nevertheless, only a few studies have considered this problem.
In fact,
it was not known whether we $(1-1/e)$-approximation can be obtained in polynomial time under a single cardinality constraint.


\subsection{Main Results}
In this paper, we develop polynomial-time approximation algorithms for maximizing monotone DR-submodular functions under cardinality constraints, polymatroid constraints, and knapsack constraints.
Let $f:\bbZ^E \to \bbR$ be a non-negative monotone DR-submodular function unless explicitly stated otherwise.
Then given any small constant $\epsilon>0$,
our algorithms find ($1-1/e-\epsilon$)-approximate solutions under these constraints.
The details are described below.


\begin{description}
    \item[Cardinality Constraint:]
    The objective is to maximize $f(\bix)$ subject to $\bfzero\leq\bix\leq\bic$ and $\bix(E)\leq r$, where $\bic\in\bbZ_+^E$, $r \in\bbZ_+$, and $\bix(E) = \sum_{e\in E}\bix(e)$.
    We design a deterministic approximation algorithm with $O(\frac{n}{\epsilon}\log\|\bic\|_\infty \log\frac{r}{\epsilon})$ running time, which is the first polynomial time algorithm for this problem.

\item[Cardinality Constraint (lattice submodular case):]
    For cardinality constraints, we also show a $(1-1/e-\epsilon)$-approximation algorithm for a monotone \emph{lattice submodular} function $f$.
    This algorithm runs in $O(\frac{n}{\epsilon^2}\log \|\bic\|_\infty \log \frac{r}{\epsilon} \log \tau)$ time, where $\tau$ is the ratio of the maximum value of $f$ to the minimum positive increase in the value of $f$.

\item[Polymatroid Constraint:]
    The objective is to maximize $f(\bix)$ subject to $\bix \in P \cap \bbZ^E_+$, where $P$ is a polymatroid given via an \emph{independence oracle}.
    Our algorithm runs in $\widetilde{O}(\frac{n^3}{\epsilon^5}\log^2 r + n^8)$ time, where $r$ is the maximum value of $\bix(E)$ for $\bix\in P$.
    This is the first polynomial time ($1-1/e-\epsilon$)-approximation algorithm for this problem.


\item[Knapsack Constraint:] The objective is to maximize $f(\bix)$ subject to $\bfzero \leq \bix \leq \bic$ and a single knapsack constraint $\biw^\top \bix \leq 1$, where $\biw\in (0,1]^E$.
    We devise an approximation algorithm with $\compknap$ running time, where $\tau$ is the ratio of the maximum value of $f$ to the minimum positive increase in the value of $f$, and $\wmin$ is the minimum entry of $\biw$. This is the first polynomial time algorithm for this problem.
\end{description}


\subsection{Technical Contribution}
In order to devise polynomial-time algorithms instead of pseudo-polynomial time algorithms, we need to combine several techniques carefully.
Our algorithms adapt the ``decreasing-threshold greedy'' framework recently introduced by Badanidiyuru and Vondr\'{a}k~\cite{Badanidiyuru:2013jc}, and work in the following way.
We maintain a feasible solution $\bix \in \bbR^E$ and a threshold $\theta \in \bbR$ during the algorithm.
Starting from $\bix=\mathbf{0}$, we greedily increase each component of $\bix$ if the average gain in the increase is above the threshold $\theta$, with consideration of constraints.
Slightly decreasing the threshold $\theta$, we repeat this greedy process until $\theta$ becomes sufficiently small.
We combine this framework with pseudo-polynomial time greedy algorithms to design a polynomial time algorithm for cardinality constraints.
We also need to incorporate the \emph{partial enumeration} technique~\cite{Soma:2014tp,Sviridenko2004} to obtain a polynomial time algorithm for knapsack constraints.
In order to develop a polynomial time algorithm for polymatroid constraints, we follow the \emph{continuous greedy} approach~\cite{Chekuri2010};
instead of the discrete problem, we consider the problem of maximizing a continuous extension of the original objective function.
After the greedy phase, we round the current fractional solution to an integral solution if needed.

As described above, our algorithms share some ideas with the algorithms of~\cite{Badanidiyuru:2013jc,Chekuri2010,Soma:2014tp,Sviridenko2004}.
However, we attain several improvements and introduce new ideas, mainly due to the essential difference between set functions and functions over the integer lattice.

\begin{description}
    \item[Binary Search in the Greedy Phase:]
        In most previous algorithms, the greedy step works as follows:
        find the direction of the maximum marginal gain and move the current solution along this direction with a \emph{unit} step size.
        However, it turns out that a naive adaptation of this greedy strategy only yields a pseudo-polynomial time algorithm.
        To circumvent this issue, we perform a binary search to determine the step size in the greedy phase.
        Combined with the decreasing threshold framework, this technique significantly reduces the time complexity.

    \item[New Continuous Extension:]
        To execute the continuous greedy algorithm, we need a continuous extension of functions over the integer lattice.
        Note that the \emph{multilinear extension}~\cite{Calinescu2011} cannot be directly used because the domain of the multilinear extension is only the hypercube $[0,1]^E$.
        In this paper, we propose a new continuous extension of a function over the integer lattice for polymatroid constraints.
        This continuous extension has similar properties to the multilinear extension when $f$ is DR-submodular,
        and is carefully designed so that we can round fractional solutions without violating polymatroid constraints.
        To the best of our knowledge, this continuous extension in $\bbR_+^E$ has not been proposed in the literature so far.

    \item[Rounding without violating polymatroid constraints:] 
        Rounding fractional solutions in $\bbR_+^E$ without violating polymatroid constraints is non-trivial.
        We show that the rounding can be reduced to rounding in a matroid polytope; therefore we can use existing rounding methods for a matroid polytope.
\end{description}

\paragraph{Modification to the conference version}
An extended abstract of this paper appeared in \cite{Soma2016:sfm}. 
Unfortunately, the algorithm for a knapsack constraint presented there is quite complicated and has a technical flaw: the correct time complexity is not as stated.
In this paper, we provide another much simpler algorithm for a knapsack constraint. 
A main difference is that the algorithm in this paper use partial enumeration, whereas the algorithm in \cite{Soma2016:sfm} used continuous greedy.


\subsection{Related Work}
Studies on maximizing monotone submodular functions were pioneered by Neumhauser, Wolsey, and Fisher~\cite{Nemhauser1978}.
They showed that a \emph{greedy} algorithm achieves a ($1-1/e$)-approximation for maximizing a monotone and submodular set function under a cardinality constraint, and a $1/2$-approximation under a matroid constraint.
Their algorithm provided a prototype for subsequent work.
For knapsack constraints, Sviridenko~\cite{Sviridenko2004} devised the first ($1-1/e$)-approximation algorithm with $O(n^5)$ running time.
Whereas these algorithms are combinatorial and deterministic,
the best known algorithms for matroid constraints are based on a continuous and randomized method.
The first ($1-1/e$)-approximation algorithm for a matroid constraint was provided by \cite{Calinescu2011}, and employed the continuous greedy approach:
first solve a continuous relaxation problem and obtain a fractional approximate solution; then round it to an integral feasible solution.
In their framework, the multilinear extension of a submodular set function was used as the objective function in the relaxation problem.
They also provided the \emph{pipage rounding} to obtain an integral feasible solution.
Chekuri, Vondr\'{a}k, and Zenklusen~\cite{Chekuri2010} designed a simple rounding method---\emph{swap rounding}---based on the exchange property of matroid base families.
Badanidiyuru and Vondr\'{a}k~\cite{Badanidiyuru:2013jc} recently devised $(1-1/e-\epsilon)$-approximation algorithms for any fixed constraint $\epsilon > 0$, with significantly lower time complexity for various constraints.
For the inapproximability side, Nemhauser~et~al.~\cite{Nemhauser1978} proved that no algorithm making polynomially many queries to a value oracle of $f$ can achieve an approximation ratio better than $1-1/e$ under any of the constraints mentioned so far.
Furthermore, Feige~\cite{Feige1998} showed that, even if $f$ is given explicitly, ($1-1/e$)-approximation is the best possible unless $\mathrm{P}=\mathrm{NP}$.

Generalized forms of submodularity have been studied in various contexts.
Fujishige~\cite{Fujishige2005} discussed submodular functions over a distributive lattice and its related polyhedra.
In the theory of discrete convex analysis by Murota~\cite{Murota2003}, a subclass of submodular functions over the integer lattice was considered.
The maximization problem has also been studied for variants of submodular functions.
Shioura~\cite{Shioura2009} investigated the maximization of discrete convex functions.
Soma~\emph{et~al.}~\cite{Soma:2014tp} provided a ($1-1/e$)-approximation algorithm for maximizing a monotone lattice submodular function under a knapsack constraint.
However, its running time is pseudo-polynomial.
Although this paper focuses on monotone submodular functions, there is a large body of work on maximization of \emph{non-monotone} submodular functions~\cite{Feige2011,Buchbinder2012,Buchbinder2014,Buchbinder2015}.
Gottschalk and Peis~\cite{Gottschalk2015} provided a $1/3$-approximation algorithm for maximizing a lattice submodular function over a (bounded) integer lattice.
Recently, \emph{bisubmodular} functions and \emph{$k$-submodular} functions, other generalizations of submodular functions, have been studied as well, and approximation algorithms for maximizing these functions can be found in~\cite{Iwata2013,Singh2012,Ward2014}.

\subsection{Organization of This Paper}
The rest of this paper is organized as follows.
In Section~\ref{sec:pre}, we provide our notations and basic facts on submodular functions and polymatroids.
Section~\ref{sec:cardinality} describes our algorithm for cardinality constraints.
In Section~\ref{sec:polymatroid}, we provide the continuous extension for polymatroid constraints and our approximate algorithm.
We present our algorithm for knapsack constraints in Section~\ref{sec:knapsack}.

\section{Preliminaries}\label{sec:pre}

\paragraph{Notation}
We denote the sets of non-negative integers and non-negative reals by $\bbZ_+$ and $\bbR_+$, respectively.
We denote the set of positive integers by $\bbN$.
For a positive integer $k \in \bbN$, $[k]$ denotes the set $\{1,\ldots,k\}$.
Throughout this paper, $E$ denotes a ground set of size $n$.
We denote the $i$-th entry of a vector $\bix \in \bbR^n$ by $\bix(i)$.
The $i$-th standard unit vector is denoted by $\bichi_i$.
The zero vector is denoted by $\bfzero$ and the all-one vector by $\bfone$.
We denote the characteristic vector of $X\subseteq E$ by $\bichi_X$.
For $f:\bbR^E \to \bbR$ and $\bix, \biy \in \bbR^E$, we define $f(\bix \mid \biy) := f(\bix + \biy) - f(\biy)$.
For $\bix \in \bbR^E$ and $X\subseteq E$, we denote $\bix(X):=\sum_{i\in X}\bix(i)$.
For a vector $\bix \in \bbR^E$, $\supp^+(\bix)$ denotes the set $\{e \in E \mid \bix(e) > 0\}$.
For $\bix \in \bbZ_+^E$, $\{\bix\}$ denotes the multiset where the element $e$ appears $\bix(e)$ times.
For arbitrary two multisets $\{\bix\}$ and $\{\biy\}$, we define $\{\bix \}\setminus\{\biy \} := \{(\bix-\biy)\vee\bfzero\}$.
For a multiset $\{\bix\}$, we define $|\{\bix\} | := \bix(E)$.
We often denote an optimal solution and the optimal value by $\bix^*$ and $\OPT$, respectively.

For an error parameter $\epsilon > 0$,
we always assume that $\frac{1}{\epsilon}$ is an integer (otherwise we can slightly reduce it without changing the asymptotic time complexity and the approximation ratio).

We frequently use the following form of Chernoff's bound.
\begin{lemma}[Relative+Additive Chernoff's bound,~\cite{Badanidiyuru:2013jc}]\label{lem:chernoff}
  Let $X_1 , \ldots , X_m$ be independent random variables such that for each $i$, $X_i \in [0,1]$.
  Let $X = \frac{1}{m} \sum X_i$ and $\mu = \E[X]$.
  Then
  \begin{align*}
    \Pr[X > (1+\alpha)\mu+\beta] \leq e^{-\frac{m\alpha \beta}{3}}, \\
    \Pr[X < (1-\alpha)\mu-\beta] \leq e^{-\frac{m\alpha \beta}{2}}.
  \end{align*}
\end{lemma}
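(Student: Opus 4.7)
The plan is to apply the standard Chernoff / moment-generating-function technique and then convert a multiplicative tail bound into the relative+additive form. First, write $S = \sum_i X_i$ so that $X = S/m$ and $\E[S] = m\mu$. For the upper tail, I would apply Markov's inequality to $e^{tS}$ with a parameter $t > 0$ to be optimized. Independence of the $X_i$ factors the moment generating function; since $X_i \in [0,1]$, the convexity bound $e^{tX_i} \leq 1 + (e^t - 1)X_i$ gives $\E[e^{tX_i}] \leq \exp((e^t - 1)\mu_i)$, so
\[
\Pr[S > (1+\alpha)m\mu + m\beta] \;\leq\; \exp\!\bigl(m\mu(e^t-1) - t\bigl((1+\alpha)m\mu + m\beta\bigr)\bigr).
\]
Setting $\delta := \alpha + \beta/\mu$ so that $(1+\alpha)\mu + \beta = (1+\delta)\mu$, and optimizing $t$ by $e^t = 1+\delta$, yields the classical multiplicative Chernoff bound $\exp\!\bigl(-m\mu[(1+\delta)\ln(1+\delta) - \delta]\bigr)$.

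The remaining step is to show $\mu\bigl[(1+\delta)\ln(1+\delta) - \delta\bigr] \geq \alpha\beta/3$. Using the elementary inequality $(1+\delta)\ln(1+\delta) - \delta \geq \delta^2/(2+\delta)$ for $\delta \geq 0$, the left-hand side is at least $\mu\delta^2/(2+\delta) \geq 2\alpha\beta/(2+\delta)$ by AM--GM on $\delta^2 = (\alpha + \beta/\mu)^2 \geq 4\alpha\beta/\mu$. A short case split (handling moderate $\delta$ directly by a factor comparison, and passing to the sharper large-deviation bound $(1+\delta)\ln(1+\delta) - \delta \geq \tfrac12 \delta \ln(1+\delta)$ when $\delta$ is large, combined with the fact that $\mu\delta = \alpha\mu + \beta \geq \beta$) recovers the claimed constant $1/3$ throughout the parameter range in which the bound is nontrivial; outside this range the inequality $(1+\alpha)\mu + \beta \geq 1$ forces the tail probability to be $0$.

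The lower-tail inequality is established by the symmetric argument, applying Markov to $e^{-tS}$ with $t > 0$ and invoking the one-sided sharper analogue $(1-\delta)\ln(1-\delta) + \delta \geq \delta^2/2$ for $\delta \in [0,1)$. Substituting $\delta = \alpha + \beta/\mu$ and repeating the AM--GM step then directly yields the improved constant $1/2$ in the exponent, without the extra factor that arose on the upper side.

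The main obstacle is not the MGF calculation---which is entirely routine---but the algebraic bookkeeping needed to extract the clean constants $1/3$ and $1/2$ from the multiplicative Chernoff form after the substitution $\delta = \alpha + \beta/\mu$. Specifically, care is required because $\mu$ may be small relative to $\beta$, so one must carefully split the parameter regime and use whichever of the two standard lower bounds on $(1+\delta)\ln(1+\delta) - \delta$ dominates; this step is where the asymmetry between the upper and lower tail constants $1/3$ vs.\ $1/2$ originates.
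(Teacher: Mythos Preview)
The paper does not prove this lemma; it is stated as a known result and attributed to Badanidiyuru and Vondr\'{a}k~\cite{Badanidiyuru:2013jc}, so there is no in-paper proof to compare against. Your approach---reducing to the standard multiplicative Chernoff bound via the substitution $\delta = \alpha + \beta/\mu$ and then using AM--GM on $(\alpha+\beta/\mu)^2 \geq 4\alpha\beta/\mu$ to extract the $\alpha\beta$ product in the exponent---is exactly the standard derivation of this ``relative+additive'' form and is correct.

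One minor slip: after AM--GM you should obtain $\mu\delta^2/(2+\delta) \geq 4\alpha\beta/(2+\delta)$, not $2\alpha\beta/(2+\delta)$; this only helps you, since then $\delta \leq 10$ already suffices for the constant $1/3$, and the large-$\delta$ regime is handled by your alternative bound together with the observation that when $(1+\alpha)\mu+\beta > 1$ the event has probability zero. For the lower tail your argument in fact yields the stronger exponent $2m\alpha\beta$ (since $\mu\delta^2/2 \geq 2\alpha\beta$), comfortably implying the stated $m\alpha\beta/2$.
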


\subsection{Submodularity and the Diminishing Return Property}
We say that a function $f:\bbZ_+^E \to \bbR$ is \emph{lattice submodular} if it satisfies $f(\bix) + f (\biy) \geq f(\bix \vee \biy)+f(\bix \wedge \biy)$ for all $\bix, \biy \in \bbZ^E$,
where $\bix\vee \biy$ and $\bix\wedge \biy$ denote the coordinate-wise maximum and minimum, respectively,
i.e., $(\bix \vee \biy)(e) = \max\{\bix(e), \biy(e)\}$ and $(\bix \wedge \biy)(e) = \min\{\bix(e), \biy(e)\}$ for each $e\in E$.
A function $f:\bbZ_+^E \to \bbR$ is \emph{monotone} if $f (\bix) \leq f (\biy)$ for all $\bix$ and $\biy$ with $\bix \leq \biy$.
We say that $f:\bbZ^E \to \bbR$ is \emph{diminishing return submodular} (\emph{DR-submodular}) if
$
  f(\bix + \bichi_i) - f(\bix) \geq f(\biy + \bichi_i) - f(\biy)
$
for every $\bix \leq \biy$ and $i \in E$, where $\bichi_i$ denotes the $i$-th unit vector.
We note that the lattice submodularity of $f$ does not imply DR-submodularity when the domain is the integer lattice.
Throughout this paper, we assume that $f(\bfzero) = 0$ without loss of generality.

If a function $f:\bbZ^E \to \bbR$ satisfies $f(\bix \vee k\bichi_i)-f(\bix)\geq f(\biy\vee k\bichi_i)-f(\biy)$ for any $i \in E$, $k \in \bbZ_+$, $\bix$ and $\biy$ with $\bix \leq \biy$,
then we say that $f$ satisfies the \emph{weak diminishing return property}.
Any monotone lattice submodular function satisfies the weak diminishing return property~\cite{Soma:2014tp}.

\subsection{Polymatroid}
Let $\rho:2^E \to \bbZ_+$ be a monotone submodular set function with $\rho(\emptyset)=0$.
The (integral) \emph{polymatroid} associated with $\rho$ is the polytope $P = \{\bix\in\bbR_+^E : \bix(X) \leq \rho(X) \quad \forall X \subseteq E\}$, and $\rho$ is called the \emph{rank function} of $P$.
The \emph{base polytope} of polymatroid $P$ is defined as $B := \{\bix \in P : \bix(E) = \rho(E)\}$.
The set of integral points in $B$ satisfies the following \emph{simultaneous exchange property}:
\begin{quote}
    For any $\bix, \biy \in B \cap \bbZ^E_+$ and $s \in \supp^+(\bix-\biy)$, there exists $t \in \supp^+(\biy-\bix)$ such that $\bix - \bichi_s + \bichi_t \in B\cap \bbZ^E_+$ and $\biy + \bichi_s - \bichi_t \in B\cap \bbZ^E_+$.
\end{quote}

The following lemma can be derived by the simultaneous exchange property.

\begin{lemma}\label{lem:base-mapping}
    Let $\bix, \biy \in B \cap \bbZ^E_+$ and $I(\bix) := \{(e, i): e\in \supp^+(\bix), 1\leq i \leq \bix(e)\}$.
    Then there exists a map $\phi: I(\bix)\to \supp^+(\biy)$ such that $\bix - \bichi_e + \bichi_{\phi(e, i)} \in B \cap \bbZ_+^E$ for each $e\in \supp^+(\bix)$ and $1\leq i \leq\bix(e)$, and $\biy = \sum_{(e,i)\in I(\bix)} \bichi_{\phi(e,i)}$.
\end{lemma}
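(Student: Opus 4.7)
My plan is to recast the existence of $\phi$ as a perfect bipartite matching problem and verify Hall's condition via the tight-set lattice of $\bix$.

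The first step is to characterize when a single exchange stays inside $B \cap \bbZ_+^E$. Since $\bix(E) = \rho(E)$, the total-sum constraint is preserved by $\bix - \bichi_e + \bichi_f$; violation of the polymatroid inequalities can only come from a set $X$ with $f \in X$ and $e \notin X$ that is \emph{tight} for $\bix$, meaning $\bix(X) = \rho(X)$. The tight sets form a lattice closed under union and intersection (a standard consequence of submodularity of $\rho$ combined with modularity of $\bix$), so the minimum tight set $C(f)$ containing $f$ is well-defined. Hence
\[
\bix - \bichi_e + \bichi_f \in B \cap \bbZ_+^E \iff \bix(e) \geq 1 \text{ and } e \in C(f).
\]

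With this in hand, define a bipartite graph $G = (I(\bix), J(\biy), E_G)$ where $J(\biy) := \{(f, j) : f \in \supp^+(\biy),\ 1 \leq j \leq \biy(f)\}$ and $(e, i) \sim (f, j)$ iff $e \in C(f)$. Both sides have size $\rho(E)$ because $\bix, \biy \in B$. A perfect matching in $G$, followed by the projection $(f, j) \mapsto f$, delivers a map $\phi$ whose single swaps are valid (by the characterization) and whose image multiset equals $\{\biy\}$ (by perfectness). So the lemma reduces to exhibiting a perfect matching in $G$.

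The main work is to verify Hall's condition. For $S \subseteq I(\bix)$, let $\biz \in \bbZ_+^E$ record its multiplicities; so $\biz \leq \bix$ and $|S| = \biz(E)$. Define $W := \bigcup\{C(f) : C(f) \subseteq E \setminus \supp^+(\biz)\}$. Since each $C(f)$ is tight and tight sets are closed under union, $W$ is itself tight for $\bix$. Using the minimality of $C(f)$, one checks that the set of $f$-coordinates appearing in $J(\biy) \setminus N(S)$ is exactly $W \cap \supp^+(\biy)$, giving
\[
|J(\biy) \setminus N(S)| = \biy(W) \leq \rho(W) = \bix(W),
\]
where the last inequality uses $\biy \in B$ and tightness of $W$. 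Rearranging yields
\[
|N(S)| \geq \biy(E) - \bix(W) = \bix(E \setminus W) \geq \bix(\supp^+(\biz)) \geq \biz(E) = |S|,
\]
since $E \setminus W \supseteq \supp^+(\biz)$ and $\biz \leq \bix$. Hall's theorem then produces the desired matching. The main obstacle in this route is isolating the right $W$ and using its tightness to bound $\biy(W)$ by $\bix(W)$; once that link is made, the rest is routine bookkeeping.
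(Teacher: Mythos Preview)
Your proof is correct and takes a genuinely different route from the paper. The paper argues by induction on $|\{\bix\}\setminus\{\biy\}|$: given $s\in\supp^+(\bix-\biy)$, the simultaneous exchange property supplies $t\in\supp^+(\biy-\bix)$ with both $\bix-\bichi_s+\bichi_t$ and $\biy':=\biy+\bichi_s-\bichi_t$ in $B\cap\bbZ_+^E$; one then invokes the inductive hypothesis for the pair $(\bix,\biy')$ and reassigns one preimage of $s$ under the resulting $\phi'$ to $t$. Your argument instead constructs the entire map in one shot via Hall's theorem, using the lattice of $\bix$-tight sets to verify the marriage condition. The paper's route is shorter and treats the simultaneous exchange axiom as a black box, though the reassignment step is a little more delicate than the write-up suggests (one must still check that sending a preimage $e_0$ of $s$ to $t$ keeps $\bix-\bichi_{e_0}+\bichi_t\in B$, which follows from $C(s)\subseteq C(t)$ in your notation). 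Your route is more self-contained---it needs only the polymatroid inequalities and closure of tight sets under union, not the exchange axiom itself---and makes the structural reason for the lemma transparent: it is a Rado--Hall matching whose defect is bounded via the tight set $W$ you isolate.
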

\begin{proof}
    Induction on $|\{\bix\} \setminus \{\biy\}|$.
    If $|\{\bix\} \setminus \{\biy\}| = 0$, then $\phi(e,i):=e$ satisfies the condition.
    Let us assume that $|\{\bix\} \setminus \{\biy\}| > 0$.
    Let us fix $s \in \supp^+(\bix-\biy)$ arbitrarily.
    By the simultaneous exchange property, we can find $t \in \supp^+(\biy-\bix)$ such that $\bix - \bichi_s + \bichi_t \in B\cap \bbZ^E_+$ and $\biy' := \biy + \bichi_s - \bichi_t \in B\cap \bbZ^E_+$.
    By the induction hypothesis, we can obtain $\phi': I(\bix)\to\supp^+(\biy')$ satisfying the conditions.
    The desired $\phi$ can be obtained by modifying $\phi'$ as $\phi'(s, \biy(s)+1) := t$.
    \qed
\end{proof}

\section{Cardinality Constraint}\label{sec:cardinality}

In this section, we consider cardinality constraints.
We provide approximation algorithms for maximizing monotone DR-submodular and lattice submodular functions in Sections~\ref{subsec:cardinality-with-drp} and~\ref{subsec:cardinality-without-drp}, respectively.

\subsection{Maximization of Monotone DR-Submodular Function}\label{subsec:cardinality-with-drp}

We start with the case of a DR-submodular function.
Let $f:\bbZ^E_+ \to \bbR_+$ be a monotone DR-submodular function.
Let $\bic \in \bbZ_+^E$ and $r \in \bbZ_+$.
We want to maximize $f(\bix)$ under constraints $\bfzero \leq \bix \leq \bic$ and $\bix(E)\leq r$.
The pseudocode description of our algorithm, based on the decreasing threshold greedy framework, is shown in Algorithm~\ref{alg:cardinality-with-drp}.

\begin{algorithm}[t!]
  \caption{Cardinality Constraint/DR-Submodular}\label{alg:cardinality-with-drp}
  \begin{algorithmic}[1]
    \REQUIRE{$f:\bbZ_+^E \to \bbR_+$, $\bic\in \bbZ_+^E$, $r \in \bbZ_+$, and $\epsilon > 0$.}
    \ENSURE{$\biy \in \bbZ_+^E$.}
    \STATE{$\biy \leftarrow \bfzero$ and $d \leftarrow \max_{e \in E}f(\bichi_e)$.}
    \FOR{($\theta=d$; $\theta\geq \frac{\epsilon}{r} d$; $\theta \leftarrow \theta(1-\epsilon)$)}
      \FOR{all $e \in E$}
      \STATE{Find maximum $k \leq\min\{\bic(e) - \biy(e),  r - \biy(E)\}$ with $f(k \bichi_e \mid \biy) \geq k\theta$ with binary search.}
        \IF{such $k$ exists}
          \STATE{$\biy \leftarrow \biy + k \bichi_e$.}\label{line:cardinarlity-with-drp-update}
        \ENDIF
      \ENDFOR
    \ENDFOR
    \RETURN{$\biy$.}
  \end{algorithmic}
\end{algorithm}

\begin{lemma}\label{lem:cardinality-update-gain}
  Let $\bix^*$ be an optimal solution.
  When we are adding $k\bichi_e$ to the current solution $\biy$ in Line~\ref{line:cardinarlity-with-drp-update}, the average gain satisfies the following:
  \[
    \frac{f(k\bichi_e \mid \biy)}{k} \geq \frac{(1 - \epsilon)}{r}\sum_{s \in \{\bix^*\} \setminus \{\biy\}}f(\bichi_s \mid \biy).
  \]
\end{lemma}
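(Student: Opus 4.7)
The plan is to combine two inequalities: a lower bound on $\frac{f(k\bichi_e \mid \biy)}{k}$ coming from the update rule, and an upper bound on $f(\bichi_s \mid \biy)$ for each $s \in \{\bix^*\} \setminus \{\biy\}$ coming from the (recent) failure of the threshold test at $s$.

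The easy half: by the definition of the update in Line~\ref{line:cardinarlity-with-drp-update}, whenever we add $k\bichi_e$ at threshold $\theta$, we have $f(k\bichi_e \mid \biy) \geq k\theta$, so $\frac{f(k\bichi_e \mid \biy)}{k} \geq \theta$. It therefore suffices to prove that $f(\bichi_s \mid \biy) \leq \theta/(1-\epsilon)$ for every $s \in \{\bix^*\}\setminus\{\biy\}$, since $|\{\bix^*\}\setminus\{\biy\}| \leq \bix^*(E) \leq r$ would then give $\sum_{s} f(\bichi_s \mid \biy) \leq r\theta/(1-\epsilon)$.

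To establish this per-element bound, I would fix such an $s$ and look at the most recent iteration of the outer for-loop in which element $s$ was processed; call that threshold $\theta_s$ and let $\biy_s$ be the value of the current solution right before $s$ was examined at threshold $\theta_s$. Note that $\theta_s \in \{\theta, \theta/(1-\epsilon)\}$ (either $s$ has been touched already in the current pass, or it was last touched in the previous pass), or alternatively $s$ has never been processed, which can only occur during the very first pass $\theta=d$, in which case one uses the trivial bound $f(\bichi_s\mid\biy) \le f(\bichi_s\mid\bfzero) \le d = \theta$ via DR-submodularity and monotonicity of the intermediate $\biy$ from $\bfzero$. Let $k_s \geq 0$ be the number of copies of $s$ added when $s$ was processed at threshold $\theta_s$. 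I split into cases. If $k_s=0$, then the test failed for $k=1$, i.e.\ $f(\bichi_s\mid\biy_s) < \theta_s$, and DR-submodularity together with $\biy \geq \biy_s$ yields $f(\bichi_s\mid\biy) \leq f(\bichi_s\mid\biy_s) < \theta_s \leq \theta/(1-\epsilon)$. If $k_s \geq 1$, the maximality of $k_s$ means one of three obstructions occurred: (a) $k_s = \bic(s)-\biy_s(s)$, which would force $\biy(s) \geq \bic(s) \geq \bix^*(s)$, contradicting $s \in \{\bix^*\}\setminus\{\biy\}$; (b) $k_s = r - \biy_s(E)$, which would force $\biy(E)=r$ and hence preclude the current update $k\bichi_e$ with $k\ge 1$, a contradiction; or (c) the threshold test fails at $k_s+1$, i.e.\ $f((k_s+1)\bichi_s \mid \biy_s) < (k_s+1)\theta_s$. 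Combining (c) with $f(k_s\bichi_s\mid\biy_s) \geq k_s\theta_s$ gives $f(\bichi_s \mid \biy_s + k_s\bichi_s) < \theta_s$, and again DR-submodularity applied with $\biy \geq \biy_s + k_s\bichi_s$ yields $f(\bichi_s \mid \biy) < \theta_s \leq \theta/(1-\epsilon)$.

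Putting the two halves together gives the claimed inequality. The main obstacle, such as it is, is the bookkeeping in the $k_s\ge 1$ subcase: one has to rule out both feasibility-saturation reasons for terminating the binary search using the hypothesis $s\in\{\bix^*\}\setminus\{\biy\}$ (for the $\bic$-constraint) and the fact that the current update itself is nonempty (for the $r$-constraint), in order to reduce to the "threshold test fails at $k_s+1$" case where DR-submodularity can finish the argument.
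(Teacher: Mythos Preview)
Your proof is correct and follows essentially the same approach as the paper: establish $f(k\bichi_e\mid\biy)/k \geq \theta$ from the update rule, bound $f(\bichi_s\mid\biy)\leq \theta/(1-\epsilon)$ for each $s\in\{\bix^*\}\setminus\{\biy\}$ by examining what happened when $s$ was last processed, then combine using $|\{\bix^*\}\setminus\{\biy\}|\leq r$. The paper argues the per-element bound by contradiction at the previous threshold and glosses over the feasibility-saturation cases you treat explicitly in (a) and (b); your bookkeeping is in fact more careful on that point.
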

\begin{proof}
  Due to DR-submodularity, the marginal values can only decrease as we add elements.
  When we are adding a vector $k \bichi_e$ and the current threshold value is $\theta$, the following inequalities hold:
  \begin{claim}
    $f(k\bichi_e \mid \biy)  \geq k\theta $, and $ f(\bichi_s \mid \biy)             \leq \frac{\theta}{1-\epsilon}$ for any $s \in  \{\bix^*\} \setminus \{\biy\}$.
  \end{claim}
  \begin{proof}
    The first inequality is trivial.
    The second inequality is also trivial by DR-submodularity if $\theta = d$.
    Thus we assume that $\theta < d$, i.e., there was at least one threshold update.
    Let $s\in\{\bix^*\} \setminus \{\biy\}$, $k'$ be the increment in the $s$-th entry in the previous threshold (i.e., $\frac{\theta}{1 - \epsilon}$), and $\biy'$ be the variable $\biy$ at the time.
    Suppose that $f(\bichi_s \mid \biy) > \frac{\theta}{1-\epsilon}$.
    Then $f((k'+1)\bichi_s \mid \biy') \geq f(\bichi_s \mid \biy) + f(k'\bichi_s \mid \biy') > \frac{\theta}{1-\epsilon} + \frac{k'\theta}{1-\epsilon} = \frac{(k'+1)\theta}{1-\epsilon}$, which contradicts the fact that $k'$ is the largest value with $f(k'\bichi_s \mid \biy')\geq \frac{k'\theta}{1-\epsilon}$.
    \qed
  \end{proof}

  The above inequalities imply that $\frac{f(k\bichi_e \mid \biy)}{k} \geq (1 - \epsilon)f(\bichi_s \mid \biy)$ for each $s \in \{\bix^*\} \setminus \{\biy\}$.
  Taking the average over these inequalities we obtain
  \[
    \frac{f(k\bichi_e \mid \biy)}{k}
    \geq \frac{1 - \epsilon}{|\{\bix^*\} \setminus \{\biy\}|}\sum_{s \in \{\bix^*\} \setminus \{\biy\}}f(\bichi_s \mid \biy)
    \geq \frac{1 - \epsilon}{r}\sum_{s \in \{\bix^*\} \setminus \{\biy\}}f(\bichi_s \mid \biy)
  \]
  \qed
\end{proof}

\begin{theorem}
  Algorithm~\ref{alg:cardinality-with-drp} achieves an approximation ratio of $1-\frac{1}{e}-\epsilon$ in $O(\frac{n}{\epsilon}\log\|\bic\|_\infty \log\frac{r}{\epsilon})$ time.
\end{theorem}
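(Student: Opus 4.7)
The plan is to combine Lemma~\ref{lem:cardinality-update-gain} with the standard inequality
\[
f(\bix^*) - f(\biy) \le \sum_{s \in \{\bix^*\} \setminus \{\biy\}} f(\bichi_s \mid \biy),
\]
which follows from monotonicity ($f(\bix^*) \le f(\bix^* \vee \biy)$) and from telescoping along any enumeration $s_1, s_2, \ldots$ of $\{\bix^*\} \setminus \{\biy\} = \{(\bix^* - \biy) \vee \bfzero\}$: one has $f(\bix^* \vee \biy) - f(\biy) = \sum_i f(\bichi_{s_i} \mid \biy + \sum_{j < i} \bichi_{s_j}) \le \sum_i f(\bichi_{s_i} \mid \biy)$ by DR-submodularity.

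Let $\biy_\ell$ denote the iterate after the $\ell$-th execution of Line~\ref{line:cardinarlity-with-drp-update}, with $\biy_0 = \bfzero$ and $\biy_{\ell+1} = \biy_\ell + k_\ell \bichi_{e_\ell}$. Combining Lemma~\ref{lem:cardinality-update-gain} with the displayed inequality gives
\[
\OPT - f(\biy_{\ell+1}) \le \left(1 - \frac{(1-\epsilon) k_\ell}{r}\right)\bigl(\OPT - f(\biy_\ell)\bigr).
\]
Iterating and using $1 - x \le e^{-x}$ yields $\OPT - f(\biy) \le e^{-(1-\epsilon)\biy(E)/r}\, \OPT$ for the returned $\biy$.

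I would next split on how the algorithm terminates. If $\biy(E) = r$, the bound gives $f(\biy) \ge (1 - e^{-(1-\epsilon)})\OPT \ge (1 - 1/e - \epsilon)\OPT$ using $e^{-(1-\epsilon)} = e^{-1}e^{\epsilon} \le e^{-1} + \epsilon$ for small $\epsilon$. Otherwise the outer loop must have exhausted its thresholds, so after the last tried threshold $\theta_{\mathrm{last}} < \epsilon d / (r(1-\epsilon))$ the binary search must have failed for every $e \in E$, giving $f(\bichi_e \mid \biy) < \theta_{\mathrm{last}}$ for each $e$. Since $|\{\bix^*\} \setminus \{\biy\}| \le r$ and $d \le \OPT$, the summed-marginals bound yields $\OPT - f(\biy) \le r\,\theta_{\mathrm{last}} \le \epsilon\,\OPT / (1-\epsilon)$, which implies $f(\biy) \ge (1 - 1/e - \epsilon)\OPT$ for sufficiently small $\epsilon$ (and the general case by a harmless rescaling).

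For the running time, the outer loop runs $O(\frac{1}{\epsilon}\log(r/\epsilon))$ times because $\theta$ is multiplied by $1-\epsilon$ from $d$ down to $\epsilon d / r$; the inner loop visits $n$ elements; and each binary search is over $k \in [1, \|\bic\|_\infty]$, costing $O(\log \|\bic\|_\infty)$ oracle queries. Multiplying gives the claimed complexity. The main delicate point is the termination case analysis: one must show that a ``premature'' exit (due to $\theta$ becoming small rather than $\biy(E)$ reaching $r$) still leaves every residual marginal small compared to $\OPT/r$, which is exactly why the threshold floor is scaled by $d$.
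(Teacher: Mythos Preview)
Your argument is correct and mirrors the paper's: both combine Lemma~\ref{lem:cardinality-update-gain} with the DR-submodularity bound $\sum_{s}f(\bichi_s\mid\biy)\ge f(\bix^*\vee\biy)-f(\biy)$ to obtain the geometric recurrence, and then use $\sum_i k_i = r$ together with $1-x\le e^{-x}$. The only organizational difference is that the paper disposes of the case $\biy(E)<r$ upfront---by comparing with a modified algorithm that keeps decreasing $\theta$ until $\biy(E)=r$ and noting the extra gain is at most $\epsilon d\le\epsilon\OPT$---whereas you case-split at the end; both rest on the same observation that residual single-element marginals are bounded by the final threshold.

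One minor imprecision worth fixing: in your second case, the assertion that ``the binary search must have failed for every $e\in E$'' in the last threshold iteration is not literally true, since successful updates can still occur there. The desired conclusion $f(\bichi_e\mid\biy)<\theta_{\mathrm{last}}$ (for each $e$ with $\biy(e)<\bic(e)$, hence for every $s\in\{\bix^*\}\setminus\{\biy\}$) does hold, but the reason is the \emph{maximality} of the chosen $k$: from $f((k{+}1)\bichi_e\mid\biy')<(k{+}1)\theta$ and $f(k\bichi_e\mid\biy')\ge k\theta$ one gets $f(\bichi_e\mid\biy'+k\bichi_e)<\theta$ by DR-submodularity, and this persists for the final $\biy\ge\biy'+k\bichi_e$.
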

\begin{proof}
  Let $\biy$ be the output of Algorithm~\ref{alg:cardinality-with-drp}.
  Without loss of generality, we can assume that $\biy(E) = r$.
  To see this, consider a modified version of the algorithm in which the threshold is updated until $\biy(E) = r$.
  Let $\biy'$ be the output of this modified algorithm.
  Since the marginal gain of increasing any coordinate of $\biy$ by one is at most $\epsilon\frac{d}{r}$, $f(\biy') - f(\biy) \leq \epsilon d \leq \epsilon\OPT$.
  Therefore, it suffices to show that $\biy'$ is a $(1-1/e-\epsilon)$-approximate solution.

  Let $\biy_i$ be the vector $\biy$ following the $i$-th update.
  We define $\biy_0 = \bfzero$.
  Let $k_i \bichi_{e_i}$ be the vector added during the $i$-th update.
  That is, $\biy_{i} = \sum_{j=1}^i k_j \bichi_{e_j}$.
  By Lemma~\ref{lem:cardinality-update-gain},
  for any $i \in \bbN$,
  \[
    \frac{f(k_{i}\bichi_{e_{i}} \mid \biy_{i-1})}{k_{i}} \geq \frac{1-\epsilon}{r} \sum_{s \in \{\bix^*\}   \setminus  \{\biy_{i-1}\} }f(\bichi_s \mid \biy_{i-1}).
  \]
  By DR-submodularity, $\sum_{s \in \{\bix^*\} \setminus \{\biy_{i-1}\}} f(\bichi_s \mid \biy_{i-1}) \geq f(\bix^* \vee \biy_{i-1}) - f(\biy_{i-1})$ holds.
  Therefore by monotonicity, 
  \begin{align*}
    f(\biy_{i}) - f(\biy_{i-1}) &= f(k_{i}\bichi_{e_{i}} \mid \biy_{i-1})\\
    & \geq
    \frac{(1-\epsilon)k_{i}}{r} (f(\bix^* \vee \biy_{i-1}) - f(\biy_{i-1}))
    \\
    &\geq
    \frac{(1-\epsilon)k_{i}}{r} (\OPT - f(\biy_{i-1})).
  \end{align*}
  Hence, we can show by induction that $
    f(\biy)
    \geq
    \left(1 - \prod_i \left(1 - \frac{(1-\epsilon)k_i}{r}\right)\right) \OPT.
 $
  Since
 $
    \prod_i \left(1-\frac{(1-\epsilon)k_i}{r} \right)
    \leq
    \prod_i \exp\left(-\frac{(1-\epsilon)k_i}{r} \right)
    =
    \exp\left( -\frac{(1-\epsilon)\sum_i k_i}{r} \right)
    = e^{-(1-\epsilon)}
    \leq \frac{1}{e} + \epsilon,
$
we obtain $(1 - \frac{1}{e} - \epsilon)$-approximation.
\qed
\end{proof}

\subsection{Maximization of Monotone Lattice Submodular Function}\label{subsec:cardinality-without-drp}
\begin{algorithm}[t!]
  \caption{\textsf{BinarySearchLattice}$(f,e,\theta,k_{\max},\epsilon)$}\label{alg:binary-search-cardinality-without-drp}
  \begin{algorithmic}[1]
  \REQUIRE{$f:\bbZ_+^E \to \bbR_+$, $e \in E$, $\theta >0$, $k_{\max} \in \bbZ_+$, $\epsilon > 0$.}
  \ENSURE{$0 \leq k \leq k_{\max}$ or \textbf{fail}.}
  \STATE{Find $k_{\min}$ with $0 \leq k_{\min} \leq k_{\max}$ such that $f(k_{\min}\bichi_e) > 0$ by binary search.}
  \STATE{\textbf{if} no such $k_{\min}$ exists \textbf{then} \textbf{fail}.}
  \FOR{ ($h = f(k_{\max} \bichi_e)$; $h \geq (1-\epsilon)f(k_{\min} \bichi_e) $; $h = (1-\epsilon)h$) }\label{line:for-binary-search-cardinality}
  \STATE{Find the smallest $k$ with $k_{\min} \leq k \leq k_{\max}$ such that $f(k \bichi_e) \geq h$ by binary search.}
  \IF{ $f(k\bichi_e) \geq (1-\epsilon)k\theta$ }
    \RETURN{$k$.}
  \ENDIF
  \ENDFOR
  \STATE{\textbf{fail}.}
  \end{algorithmic}
\end{algorithm}

We now consider the case of a lattice submodular function.
Let $f:\bbZ^E_+ \to \bbR_+$ be a monotone lattice submodular function, $\bic \in \bbZ_+^E$, and $r \in \bbZ_+$.
We want to maximize $f(\bix)$ under the constraints $0\leq \bix \leq \bic$ and $\bix(E)\leq r$.

The main issue is that we cannot find $k$ such that $f(k\bichi_e \mid \bix) \geq k\theta$ by naive binary search.
However, we can find $k$ such that $f(k\bichi_e \mid \bix) \geq (1-\epsilon) k\theta$ in polynomial time (if exists).
The key idea is guessing the value of $f(k \bichi_e)$ by iteratively decreasing the threshold and checking whether the desired $k$ exists with binary search.
See Algorithm~\ref{alg:binary-search-cardinality-without-drp} for details.

We have the following properties:
\begin{lemma}\label{lem:binary-search-cardinality}
  Algorithm~\ref{alg:binary-search-cardinality-without-drp} satisfies the following:
  \begin{itemize}
  \itemsep=0pt
  \item[(1)] Suppose that there exists $0 \leq k^* \leq k_{\max}$ such that $f(k^*\bichi_e) \geq k^*\theta$.
  Then, Algorithm~\ref{alg:binary-search-cardinality-without-drp} returns $k$ with $k_{\min} \leq k \leq k_{\max}$ such that $f(k\bichi_e) \geq (1-\epsilon)k\theta$.
  \item[(2)] Suppose that Algorithm~\ref{alg:binary-search-cardinality-without-drp} outputs $0 \leq k \leq k_{\max}$.
    Then, $f(k'\bichi_e) < \max\{\frac{f(k\bichi_e)}{1-\epsilon}, k'\theta\}$ for any $k < k' \leq k_{\max}$.
  \item[(3)] If Algorithm~\ref{alg:binary-search-cardinality-without-drp} does not fail,
  then the output $k$ satisfies $f(k\bichi_e) \geq (1-\epsilon)k\theta$.
  \item[(4)] Let $k_{\min} = \min\{k \mid f(k\bichi_e) > 0\}$. Then, Algorithm~\ref{alg:binary-search-cardinality-without-drp} runs in  $O(\frac{1}{\epsilon}\log \frac{f(k_{\max}\bichi_e)}{f(k_{\min}\bichi_e)} \cdot \log k_{\max})$ time.
  If no such $k_{\min}$ exists, then Algorithm~\ref{alg:binary-search-cardinality-without-drp} runs in $O(\log k_{\max})$ time.
  \end{itemize}
\end{lemma}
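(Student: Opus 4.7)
The plan is to combine the monotonicity of $f$ (which makes $k\mapsto f(k\bichi_e)$ non-decreasing) with the geometric shrinking of the threshold $h$ to analyze all four items in turn. Write $h_0 = f(k_{\max}\bichi_e)$ and $h_j = (1-\epsilon)^j h_0$ for the threshold at the $j$-th iteration of the for loop on line~\ref{line:for-binary-search-cardinality}, and let $k_j$ denote the smallest $k \in [k_{\min}, k_{\max}]$ with $f(k\bichi_e) \geq h_j$ found there by binary search. Monotonicity forces $k_0 \geq k_1 \geq \cdots$, and the minimality of $k_j$ yields $f(k\bichi_e) < h_j$ whenever $k < k_j$.

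Items (3), (1), and (4) will be relatively direct. Item (3) reads off the \textbf{if} guard in the algorithm. For (1), since $f(k^*\bichi_e) \geq k^*\theta > 0$ forces $k^* \geq k_{\min}$, I would pick the smallest $j$ with $h_j \leq f(k^*\bichi_e)$; then $h_j > (1-\epsilon)f(k^*\bichi_e) \geq (1-\epsilon)f(k_{\min}\bichi_e)$ keeps iteration $j$ inside the loop range, monotonicity gives $k_j \leq k^*$, and $f(k_j\bichi_e) \geq h_j > (1-\epsilon)k^*\theta \geq (1-\epsilon)k_j\theta$, so the algorithm returns at or before iteration $j$. For (4), the outer loop performs $O(\frac{1}{\epsilon}\log(f(k_{\max}\bichi_e)/f(k_{\min}\bichi_e)))$ iterations (since $h$ is multiplied by $(1-\epsilon)$ per step until it drops below $(1-\epsilon)f(k_{\min}\bichi_e)$), each doing one binary search over $[k_{\min},k_{\max}]$ of cost $O(\log k_{\max})$; when no $k_{\min}$ exists, only the preliminary binary search runs.

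The main obstacle is (2), which I intend to prove by contradiction. Assume some $k' > k$ violates the conclusion, i.e., both $f(k'\bichi_e) \geq f(k\bichi_e)/(1-\epsilon)$ and $f(k'\bichi_e) \geq k'\theta$ hold, and write $k = k_{j^*}$. If $j^* = 0$ then $k = k_0$, monotonicity sandwiches $f(k'\bichi_e) = f(k\bichi_e)$, and the strict $1/(1-\epsilon)$ blow-up is impossible since $f(k\bichi_e) \geq f(k_{\min}\bichi_e) > 0$. If $j^* \geq 1$, then $f(k'\bichi_e) \geq h_{j^*-1}$ shows that $k'$ is already a candidate at iteration $j^*-1$, so I would define $j$ as the smallest index in $\{0,\ldots,j^*-1\}$ with $k_j \leq k'$. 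When $j \geq 1$, minimality gives $k_{j-1} > k'$, so $f(k'\bichi_e) < h_{j-1} = h_j/(1-\epsilon) \leq f(k_j\bichi_e)/(1-\epsilon) < k_j\theta \leq k'\theta$, where the middle inequality uses that since the algorithm did not return at iteration $j < j^*$, the failed-return condition $f(k_j\bichi_e) < (1-\epsilon)k_j\theta$ applies; this contradicts $f(k'\bichi_e) \geq k'\theta$. When $j = 0$, monotonicity forces $f(k'\bichi_e) = f(k_0\bichi_e) < (1-\epsilon)k_0\theta \leq (1-\epsilon)k'\theta$, again a contradiction. The delicate point is that the algorithm inspects only the single representative $k_j$ at each iteration, so for a $k'$ not coinciding with any $k_j$ one must locate the correct gap $[k_j, k_{j-1})$ containing it and chain the minimality with the failed-return inequality at iteration $j$ to transfer the bound from $f(k_j\bichi_e)$ to $f(k'\bichi_e)$ itself.
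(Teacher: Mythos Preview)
Your proposal is correct and follows essentially the same approach as the paper's own proof. For items (1), (3), and (4) the arguments are virtually identical; for item (2) the paper argues directly by bucketing $f(k\bichi_e)$ and $f(k'\bichi_e)$ into threshold levels $h, h' \in H$ and splitting on $h = h'$ versus $h < h'$, whereas you argue by contradiction and split on whether $j^* = 0$ and on whether the earliest level $j$ with $k_j \leq k'$ is zero, but the underlying mechanism---locate the earlier iteration at which $k'$ already clears the threshold and invoke the failed-return inequality there---is the same.
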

\begin{proof}
  (1)
  Let  $H := \{ (1-\epsilon)^s f(k_{\max}\bichi_e)  : s \in \bbZ_+, (1-\epsilon)^s f(k_{\max}\bichi_e) \geq (1-\epsilon)f(k_{\min}\bichi_e) \}$.
  Let $h^*$ be the (unique) element in $H$ such that $h^* \leq f(k^*\bichi_e) < \frac{h^*}{1-\epsilon}$.
  Let $k$ be the minimum integer such that $f(k\bichi_e) \geq h^*$.
  Then, $k \leq k^*$ and $f(k^*\bichi_e) < \frac{f(k\bichi_e)}{1-\epsilon}$.
  Thus $k\theta \leq k^*\theta \leq f(k^* \bichi_e) \leq \frac{f(k\bichi_e)}{1-\epsilon}$, which means that $(1-\epsilon) k\theta \leq f(k\bichi_e)$.

  (2)
  Let $h$ and $h'$ be the unique elements in $H$ such that $h \leq f(k\bichi_e) < \frac{h}{1-\epsilon}$ and $h' \leq f(k'\bichi_e) < \frac{h'}{1-\epsilon}$, respectively.
  Note that $h \leq h'$.
  If $h = h'$ then $f(k'\bichi_e) \leq \frac{f(k\bichi_e)}{1 - \epsilon} $.
  If $h < h'$, let $k_1$ be the minimum $k_1$ such that $f(k_1\bichi_e) \geq h'$.
  Then $\frac{f(k'\bichi_e)}{k'} \leq \frac{f(k_1\bichi_e)}{(1-\epsilon)k_1} < \theta$, where the last inequality follows from the fact that $k_1$ is examined by the algorithm before $k$.

  (3) and (4) are obvious.
  \qed
\end{proof}

Our algorithm for maximizing monotone lattice submodular functions under a cardinality constraint is based on the decreasing threshold greedy framework and uses Algorithm~\ref{alg:binary-search-cardinality-without-drp} to find elements whose marginal gain is as large as the current threshold.
The pseudocode description is shown in Algorithm~\ref{alg:cardinality}.

\begin{algorithm}[t!]
  \caption{Cardinality Constraint/Lattice Submodular}\label{alg:cardinality}
  \begin{algorithmic}[1]
    \REQUIRE{$f:\bbZ_+^E \to \bbR_+$, $\bic \in \bbZ_+^E$, $r \in \bbZ_+$, $\epsilon > 0$.}
    \STATE{$\biy \leftarrow \bfzero$ and $d_{\max} \leftarrow \max_{e \in E}f(\bic(e)\bichi_e)$}.
    \FOR{($\theta=d_{\max}$; $\theta\geq \frac{\epsilon}{r} d_{\max}$; $\theta \leftarrow \theta(1-\epsilon)$)}
      \FOR{all $e \in E$}
      \STATE{Invoke \textsf{BinarySearchLattice}($f(\cdot \mid \biy), e, \theta, \min\{\bic(e) - \biy(e),  r - \biy(E)\}, \epsilon$)}
        \IF{\textsf{BinarySearchLattice} did not fail and returned $k \in \bbN$}
          \STATE{$\biy \leftarrow \biy + k \bichi_e$.}\label{line:cardinarlity-update}
        \ENDIF
      \ENDFOR
    \ENDFOR
    \RETURN $\biy$.
  \end{algorithmic}
\end{algorithm}

Let $\theta_i$ be $\theta$ in the $i$-th iteration of the outer loop.
Let $k_{i,e}$ be $k$ when handling $e \in E$ in the $i$-th iteration of the outer loop.
Let $\biy_{i,e}$ be the vector $\biy$ right before adding the vector $k_{i,e}\bichi_e$.
For notational simplicity, we define $\biy_{0,e} = \bfzero$ and $k_{0,e} = 0$ for all $e \in E$.
Let $\biDelta_{i,e} = (\bix^* - \biy_{i,e}) \vee \bfzero$.
The following lemma shows that $f(\biDelta_{i,e}(a)\bichi_a \mid \biy_{i,e}) \leq \biDelta_{i,e}(a)\theta_i$ for any $i \in \bbN$ and $e, a \in E$, which is a crucial property to guarantee the approximation ratio of Algorithm~\ref{alg:cardinality}, except that there are several small error terms.

\begin{lemma}\label{lem:cardinality-marginal}
  For any $i \in \bbN$ and $e,a \in E$, 
  \[
    f(\biDelta_{i,e}(a)\bichi_a \mid \biy_{i,e})
    \leq
    \max\Bigl\{
    \frac{\epsilon}{1-\epsilon} f(k_{i-1,a}\bichi_a \mid \biy_{i-1,a}),
    \bigl(\epsilon k_{i-1,a} +  \biDelta_{i,e}(a)\bigr)\frac{\theta_i}{1-\epsilon}
    \Bigr\}.
  \]
\end{lemma}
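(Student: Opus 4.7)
The plan is to trace the marginal gain at iteration $i$ back to the state when coordinate $a$ was processed in iteration $i-1$ and then invoke the output-guarantees of \textsf{BinarySearchLattice} recorded in Lemma~\ref{lem:binary-search-cardinality}.

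First I would apply the weak diminishing return property, which any monotone lattice submodular function satisfies. Since the algorithm's state grows monotonically and coordinate $a$ is touched only while processing $a$, we have $\biy_{i-1,a} + k_{i-1,a}\bichi_a \leq \biy_{i,e}$. Set $\widetilde{k} := \bix^*(a) - \biy_{i-1,a}(a)$ and assume without loss of generality $\biDelta_{i,e}(a) > 0$ (otherwise the claim is trivial). Applying weak DR to the pair $\bix := \biy_{i-1,a} + k_{i-1,a}\bichi_a \leq \biy_{i,e} =: \biy$ under the join with $\bix^*(a)\bichi_a$ (noting $\bix^*(a) \geq \bix(a)$) should yield
\begin{equation*}
f(\biDelta_{i,e}(a)\bichi_a \mid \biy_{i,e}) \leq f(\widetilde{k}\bichi_a \mid \biy_{i-1,a}) - f(k_{i-1,a}\bichi_a \mid \biy_{i-1,a}).
\end{equation*}

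Next I would bound the first term on the right. Feasibility of $\bix^*$ (namely $\bix^*(a)\leq \bic(a)$ and $\bix^*(E)\leq r$) ensures $\widetilde{k}$ lies in the range scanned by \textsf{BinarySearchLattice} at iteration $i-1$. If $k_{i-1,a} > 0$, Lemma~\ref{lem:binary-search-cardinality}(2) gives
\begin{equation*}
f(\widetilde{k}\bichi_a \mid \biy_{i-1,a}) \leq \max\left\{\frac{f(k_{i-1,a}\bichi_a \mid \biy_{i-1,a})}{1-\epsilon},\ \widetilde{k}\theta_{i-1}\right\},
\end{equation*}
while if $k_{i-1,a}=0$ (\textsf{BinarySearchLattice} failed) the contrapositive of Lemma~\ref{lem:binary-search-cardinality}(1) gives the same bound with the first entry of the maximum equal to zero. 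Recall $\theta_{i-1} = \theta_i/(1-\epsilon)$.

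Finally I would finish by case analysis on which term attains the maximum, combined with $\widetilde{k} = k_{i-1,a} + \biDelta_{i,e}(a)$. When the first term wins, subtracting $f(k_{i-1,a}\bichi_a \mid \biy_{i-1,a})$ leaves exactly $\frac{\epsilon}{1-\epsilon}f(k_{i-1,a}\bichi_a \mid \biy_{i-1,a})$, matching the first expression in the lemma. When the second term wins, Lemma~\ref{lem:binary-search-cardinality}(3) (or the trivial $0\geq 0$ when $k_{i-1,a}=0$) gives $f(k_{i-1,a}\bichi_a \mid \biy_{i-1,a}) \geq (1-\epsilon)k_{i-1,a}\theta_{i-1} = k_{i-1,a}\theta_i$, and a direct computation $\widetilde{k}\theta_{i-1} - k_{i-1,a}\theta_i = (\epsilon k_{i-1,a} + \biDelta_{i,e}(a))\theta_i/(1-\epsilon)$ yields the second expression.

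I expect the main obstacle to be the reduction step, since weak DR controls joins with $k\bichi_a$ rather than plain additive marginals; translating between the two depends on the invariant $\biy_{i,e}(a) = \biy_{i-1,a}(a) + k_{i-1,a}$, which only holds when coordinate $a$ has not been incremented again during iteration $i$ before handling $e$. One must therefore verify that the relevant applications of the lemma in Algorithm~\ref{alg:cardinality}'s analysis never violate this order condition (or otherwise absorb the extra contribution using $f(k_{i,a}\bichi_a\mid\biy_{i,a})\geq (1-\epsilon)k_{i,a}\theta_i$). A secondary technicality is checking $\widetilde{k}\leq k_{\max}^{(i-1,a)}$ in the corner case $\biy_{i-1,a}(E)=r$, which forces $k_{i-1,a}=0$ and must be handled by backing off to an earlier iteration at which $k_{\max}>0$ and appealing once more to weak DR.
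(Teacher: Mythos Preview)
Your proposal is correct and follows essentially the same argument as the paper: reduce via submodularity (the weak diminishing return inequality you invoke is exactly the lattice-submodularity step the paper uses) to a bound on $f((k'+\biDelta(a))\bichi_a\mid\biy_{i-1,a})$, apply Lemma~\ref{lem:binary-search-cardinality}(2), and finish by the same two-case analysis using Lemma~\ref{lem:binary-search-cardinality}(3). One minor difference worth noting: for $i=1$ the paper does not go through the binary-search lemma (there is no iteration-$0$ call to \textsf{BinarySearchLattice}) but instead uses the direct estimate $f(\biDelta(a)\bichi_a)\leq d_{\max}\leq (k'+\biDelta(a))\theta_1/(1-\epsilon)$.
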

\begin{proof}
  We define $\biDelta = \biDelta_{i,e}$, $k' = k_{i-1,a}$, $\biy = \biy_{i,a}$, $\biy' = \biy_{i-1,a}$, and $\theta' = \frac{\theta_i}{1-\epsilon}$ for notational simplicity.
  We assume $\biDelta(a) > 0$ as otherwise the statement is trivial.
  When $i \geq 2$, from~(2) in Lemma~\ref{lem:binary-search-cardinality},
  \[
    f((k'+\biDelta(a))\bichi_a \mid \biy')
    \leq \max\Bigl\{\frac{f(k'\bichi_a \mid \biy')}{1-\epsilon}, (k'+\biDelta(a))\theta'\Bigr\}.
  \]
  We can verify that this inequality also holds when $i = 1$ because
  \[
    f((k'+\biDelta(a))\bichi_a \mid \biy') =
    f(\biDelta(a)\bichi_a) \leq d_{\max} \leq (k'+\biDelta(a))\theta'.
  \]

  Since $f((k'+\biDelta(a))\bichi_a \mid \biy') \geq f(\biDelta(a)\bichi_a \mid \biy) + f(k'\bichi_a \mid \biy')$
  holds from lattice submodularity, it follows that
  \begin{align*}
    f(\biDelta(a)\bichi_a \mid \biy) \leq \max\Bigl\{\frac{\epsilon}{1-\epsilon} f(k'\bichi_a \mid \biy'), (k'+\biDelta(a))\theta' - f(k'\bichi_a \mid \biy')\Bigr\}.
  \end{align*}
  Since $f(k'\bichi_a \mid \biy') \geq (1-\epsilon)k'\theta'$ from~(1) of Lemma~\ref{lem:binary-search-cardinality},
  we obtain the claim.
  \qed
\end{proof}

\begin{lemma}\label{lem:cardinality-marginal-sum}
  \[
    f(k_{i,e}\bichi_e \mid \biy_{i,e})
    \geq
    \frac{(1-\epsilon)^2k_{i,e}}{(1+\epsilon)r}\Bigl(\sum_{a \in E}f(\biDelta_{i,e}(a)\bichi_a \mid \biy_{i,e})
    - \frac{\epsilon}{1-\epsilon} f(\biy_{i,e}) \Bigr).
  \]
\end{lemma}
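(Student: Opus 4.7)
The plan is to sandwich $f(k_{i,e}\bichi_e \mid \biy_{i,e})$ between the binary-search guarantee of Lemma~\ref{lem:binary-search-cardinality}(3) and the sum bound provided by Lemma~\ref{lem:cardinality-marginal}. If \textsf{BinarySearchLattice} failed on $(i,e)$ then $k_{i,e}=0$ and both sides of the stated inequality vanish, so I may assume it succeeded. Then Lemma~\ref{lem:binary-search-cardinality}(3), applied to the shifted function $f(\cdot\mid \biy_{i,e})$, yields $f(k_{i,e}\bichi_e \mid \biy_{i,e}) \geq (1-\epsilon)k_{i,e}\theta_i$. Hence it suffices to show
\[
  \theta_i \;\geq\; \frac{1-\epsilon}{(1+\epsilon)r}\Bigl(\sum_{a\in E} f(\biDelta_{i,e}(a)\bichi_a \mid \biy_{i,e}) - \frac{\epsilon}{1-\epsilon}f(\biy_{i,e})\Bigr),
\]
since multiplying this through by $(1-\epsilon)k_{i,e}$ gives the claim.

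Next, I apply Lemma~\ref{lem:cardinality-marginal} to each $a\in E$ and replace the $\max$ by the corresponding sum (using $\max\{A,B\}\leq A+B$), obtaining
\[
  f(\biDelta_{i,e}(a)\bichi_a \mid \biy_{i,e}) \;\leq\; \frac{\epsilon}{1-\epsilon}f(k_{i-1,a}\bichi_a \mid \biy_{i-1,a}) + \frac{\theta_i}{1-\epsilon}\bigl(\epsilon k_{i-1,a} + \biDelta_{i,e}(a)\bigr).
\]
Summing over $a\in E$ produces three scalar sums, which I bound separately: (i) $\sum_a f(k_{i-1,a}\bichi_a \mid \biy_{i-1,a})$ telescopes along the fixed processing order of iteration $i-1$ to $f(\biy^{\mathrm{end}}_{i-1}) - f(\biy^{\mathrm{start}}_{i-1}) \leq f(\biy^{\mathrm{end}}_{i-1}) \leq f(\biy_{i,e})$, where $\biy^{\mathrm{start}}_{i-1}$ and $\biy^{\mathrm{end}}_{i-1}$ are the values of $\biy$ at the start and end of iteration $i-1$, and the last step uses $\biy^{\mathrm{end}}_{i-1}\leq \biy_{i,e}$ together with monotonicity of $f$; (ii) $\sum_a k_{i-1,a} = \biy^{\mathrm{end}}_{i-1}(E) - \biy^{\mathrm{start}}_{i-1}(E) \leq r$, since the algorithm maintains $\biy(E)\leq r$; (iii) $\sum_a \biDelta_{i,e}(a) = \biDelta_{i,e}(E) \leq \bix^*(E) \leq r$, by the cardinality constraint on $\bix^*$.

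Plugging these bounds into the summed inequality gives $\sum_a f(\biDelta_{i,e}(a)\bichi_a\mid\biy_{i,e}) \leq \frac{\epsilon}{1-\epsilon}f(\biy_{i,e}) + \frac{(1+\epsilon)r\theta_i}{1-\epsilon}$, and rearranging recovers the target lower bound on $\theta_i$. The main obstacle is the index bookkeeping in step (i): one must carefully verify that the $\biy_{i-1,a}$ snapshots form a chain along the $(i-1)$-th pass whose marginal increments telescope correctly, and then compare the endpoint $\biy^{\mathrm{end}}_{i-1}$ with $\biy_{i,e}$ via monotonicity. Once these identifications are set up, the remainder is routine algebra.
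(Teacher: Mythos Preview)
Your proposal is correct and follows essentially the same route as the paper: both combine the lower bound $f(k_{i,e}\bichi_e\mid\biy_{i,e})\geq(1-\epsilon)k_{i,e}\theta_i$ from Lemma~\ref{lem:binary-search-cardinality} with the sum over $a\in E$ of Lemma~\ref{lem:cardinality-marginal} (replacing the $\max$ by the sum of its two nonnegative arguments), then use $\sum_a k_{i-1,a}\leq r$, $\biDelta_{i,e}(E)\leq r$, and the telescoping bound $\sum_a f(k_{i-1,a}\bichi_a\mid\biy_{i-1,a})\leq f(\biy_{i,e})$. Your write-up is in fact more explicit than the paper's about the telescoping in step~(i) and the $\max\{A,B\}\leq A+B$ step, but the argument is the same.
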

\begin{proof}
  Note that for any $i \in \bbN$ and $e \in E$,
  $f(k_{i,e}\bichi_e \mid \biy_{i,e})  \geq (1-\epsilon)k_{i,e}\theta_{i}$ from~(1) of Lemma~\ref{lem:binary-search-cardinality}.
  Hence, by summing up the inequalities of Lemma~\ref{lem:cardinality-marginal} over all $a \in E$, we obtain
  \begin{align*}
    & \sum_{a \in E}f(\biDelta_{i,e}(a)\bichi_a \mid \biy_{i,e}) \\
    & \leq
    \sum_{a \in E}\bigl(\epsilon k_{i-1,a} +  \biDelta_{i,e}(a)\bigr)\frac{\theta_i}{1-\epsilon} +
    \sum_{a \in E}\frac{\epsilon}{1-\epsilon} f(k_{i-1,a}\bichi_a \mid \biy_{i-1,a}) \\
    & \leq
    (1+\epsilon)r\frac{\theta_i}{1-\epsilon} + \frac{\epsilon}{1-\epsilon} f(\biy_{i,e}) \\
    & \leq
    \frac{(1+\epsilon)r}{(1-\epsilon)^2 k_{i,e}}f(k_{i,e}\bichi_e \mid \biy_{i,e}) +
    \frac{\epsilon}{1-\epsilon} f(\biy_{i,e}).
  \end{align*}
  We obtain the claim by rearranging the inequality.
  \qed
\end{proof}

\begin{theorem}
  Algorithm~\ref{alg:cardinality} achieves an approximation ratio of $1-\frac{1}{e}-O(\epsilon)$ in
  $O(\frac{n}{\epsilon^2}\log\|\bic\|_\infty \log\frac{r}{\epsilon} \log \tau )$ time,
  where $\tau = \frac{ \max_{e \in E} f(c(e)\bichi_e) }{ \min\{ f(\bichi_e \mid \bix) : e \in E, \bfzero \leq \bix \leq \bic, f(\bichi_e \mid \bix) > 0 \}}$.
\end{theorem}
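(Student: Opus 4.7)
The plan is to follow the blueprint of the DR-submodular case from Section~\ref{subsec:cardinality-with-drp}, substituting Lemma~\ref{lem:cardinality-marginal-sum} for Lemma~\ref{lem:cardinality-update-gain} and absorbing the extra multiplicative and additive error terms into an overall $O(\epsilon)$ loss. As in that proof, I may assume $\biy(E)=r$ at termination by a standard modified-algorithm argument: continuing to decrease the threshold only adds residual mass whose total contribution is at most $\epsilon\OPT$.

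The ingredient not directly provided by Lemma~\ref{lem:cardinality-marginal-sum} is the inequality
\[
  \sum_{a \in E} f\bigl(\biDelta_{i,e}(a)\bichi_a \mid \biy_{i,e}\bigr) \;\geq\; \OPT - f(\biy_{i,e}),
\]
which is the lattice-submodular analogue of $\sum_s f(\bichi_s \mid \biy) \geq f(\bix^*\vee\biy) - f(\biy)$ used in the DR case. Since $f$ is monotone and lattice submodular, it satisfies the weak diminishing return property recalled in Section~\ref{sec:pre}. Telescoping $f(\bix^*\vee \biy_{i,e}) - f(\biy_{i,e})$ by raising coordinates of $\biy_{i,e}$ one at a time up to $\bix^*\vee \biy_{i,e}$ and bounding each increment by the corresponding $f(\biDelta_{i,e}(a)\bichi_a \mid \biy_{i,e})$ via weak DR then yields the claim, and monotonicity gives $f(\bix^*\vee \biy_{i,e}) \geq \OPT$.

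Plugging this into Lemma~\ref{lem:cardinality-marginal-sum} and setting $\alpha = \frac{1}{1-\epsilon}$, $c = \frac{(1-\epsilon)^2}{1+\epsilon}$, I obtain the per-update recurrence
\[
  \OPT - \alpha f(\biy_{i,e} + k_{i,e}\bichi_e) \;\leq\; \bigl(\OPT - \alpha f(\biy_{i,e})\bigr)\Bigl(1 - \frac{\alpha c\, k_{i,e}}{r}\Bigr).
\]
Taking the product across all updates and using $1-x \leq e^{-x}$ gives $\alpha f(\biy) \geq \bigl(1 - \exp(-\alpha c\,\biy(E)/r)\bigr)\OPT$. Under $\biy(E)=r$, the exponent equals $\alpha c = (1-\epsilon)/(1+\epsilon) = 1 - O(\epsilon)$, which yields $f(\biy) \geq (1-1/e-O(\epsilon))\OPT$ after a routine Taylor expansion in $\epsilon$.

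For the running time, the outer loop runs $O(\frac{1}{\epsilon}\log\frac{r}{\epsilon})$ times and each inner pass calls \textsf{BinarySearchLattice} on $n$ elements. By Lemma~\ref{lem:binary-search-cardinality}(4) each call costs $O(\frac{1}{\epsilon}\log\tau\,\log\|\bic\|_\infty)$, since the ratio $f(k_{\max}\bichi_e\mid\biy)/f(k_{\min}\bichi_e\mid\biy)$ is at most $\tau$: monotonicity and lattice submodularity bound the numerator by $f(\bic(e)\bichi_e) \leq d_{\max}$, while the denominator is at least the smallest positive single-coordinate marginal of $f$. Multiplying the factors yields the claimed complexity. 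The main obstacle in writing up the proof is the careful bookkeeping needed to show that the multiplicative $(1-\epsilon)$ and additive $\tfrac{\epsilon}{1-\epsilon}f(\biy_{i,e})$ errors in Lemma~\ref{lem:cardinality-marginal-sum}, together with the approximation error of \textsf{BinarySearchLattice}, all collapse into a single $O(\epsilon)\OPT$ term; a secondary subtlety is justifying the WLOG assumption $\biy(E)=r$, which is more delicate here than in the DR case because Lemma~\ref{lem:binary-search-cardinality}(2) does not directly bound the returned $k$ on a per-unit basis.
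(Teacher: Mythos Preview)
Your proposal is correct and follows essentially the same approach as the paper: both combine Lemma~\ref{lem:cardinality-marginal-sum} with the lattice-submodular telescoping bound $\sum_a f(\biDelta_{i,e}(a)\bichi_a \mid \biy_{i,e}) \geq f(\bix^*\vee\biy_{i,e}) - f(\biy_{i,e})$ to obtain a per-update recurrence, then close via the standard product argument. Your repackaging via $\alpha = 1/(1-\epsilon)$ is algebraically equivalent to the paper's device of replacing $\OPT$ by $\widetilde{\OPT} = (1-\epsilon)\OPT$, and your running-time analysis matches the paper's; the WLOG reduction to $\biy(E)=r$ is handled the same way in both (the paper also sketches it via the modified algorithm and Lemma~\ref{lem:cardinality-marginal} applied once more).
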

\begin{proof}
  Let $\biy$ be the final output of Algorithm~\ref{alg:cardinality}.
  We can assume that $\biy(E) = r$.
  To see this, consider the modified algorithm in which $\theta$ is updated until $\biy(E) = r$.
  The output $\biy'$ of this modified algorithm satisfies $\biy'(E) = \biy$.
  By a similar argument as above, we can show that
  $\sum_{a \in E} f((\biy' - \biy)(a)\bichi_a \mid \biy)
  \leq (1+\epsilon) r \frac{\theta_{\min}}{1-\epsilon} + \frac{\epsilon}{1-\epsilon} f(\biy)$,
  where $\theta_{\min} = \frac{\epsilon}{r} d_{\max}$.
  This yields $f(\biy') \leq (1 + O(\epsilon))f(\biy) + O(\epsilon) \OPT$.
  Hence it suffices to show that $\biy'$ gives $(1-1/e-\epsilon)$-approximation.

  Let $\alpha = \frac{(1-\epsilon)^2}{1+\epsilon}$ and $\beta = \frac{\epsilon }{1-\epsilon}$.
  By Lemma~\ref{lem:cardinality-marginal-sum} and lattice submodularity of $f$,
  \begin{align*}
    f(k_{i,e}\bichi_e \mid \biy_{i,e})
    & \geq
    \frac{\alpha k_{i,e}}{r}\Bigl(\sum_{a \in E}f(\biDelta_{i,e}(a)\bichi_a \mid \biy_{i,e})
    - \beta f(\biy_{i,e}) \Bigr)\\
    & \geq
    \frac{\alpha k_{i,e}}{r}\Bigl((f(\bix^* \vee \biy_{i,e}) - f(\biy_{i,e}))
    - \beta  f(\biy_{i,e})\Bigr) \\
    & \geq
    \frac{\alpha k_{i,e}}{r} \Bigl(\OPT - (1 + \beta) f(\biy_{i,e}) \Bigr) \\
& = \alpha(1+\beta)\frac{k_{i,e}}{r} \left( \widetilde{\OPT} - f(\biy_{i,e})\right) \\
    & = \frac{(1- O(\epsilon))k_{i,e}}{r} \left( \widetilde{\OPT} - f(\biy_{i,e})\right),
  \end{align*}
  where $\widetilde{\OPT} = \OPT / (1+ \beta) = (1 - \epsilon)\OPT$.
  By the same argument in the case of DR-submodular functions, we can show that $f(\biy) \geq (1 - 1/e - O(\epsilon))\widetilde{\OPT} = (1 - 1/e - O(\epsilon))\OPT$.
  The analysis of the time complexity is straightforward.
  \qed
\end{proof}

\section{Polymatroid Constraint}\label{sec:polymatroid}
Let $P$ be a polymatroid with a ground set $E$ and the rank function $\rho:2^E\to\bbZ_+$.
The objective is to maximize $f(\bix)$ subject to $\bix\in P\cap\bbZ^E$, where $f$ is a DR-submodular function. In what follows, we denote $\rho(E)$ by $r$.
We assume that $P$ is contained in the interval $[\bfzero, \bic]:=\{\bix\in\bbR^E_+ : \bfzero \leq \bix \leq \bic \}$.

We start by describing a continuous extension for polymatroid constraints in Section~\ref{subsec:continuous-extension-polymatroid} and then state and analyze our algorithm in Section~\ref{subsec:algorithm-polymatroid}.

\subsection{Continuous extension for polymatroid constraints}\label{subsec:continuous-extension-polymatroid}
For $\bix \in \bbR^E$, let $\lfloor \bix \rfloor$ denote the vector obtained by rounding down each entry of $\bix$.
For $a\in\bbR$, let $\langle a \rangle$ denote the fractional part of $a$,
that is,
$\langle a \rangle := a - \lfloor a \rfloor$.
For $\bix \in \bbR^E$,
we define $C(\bix) := \{ \biy \in \bbR^E \mid \lfloor \bix \rfloor \leq \biy \leq \lfloor \bix \rfloor + \mathbf{1} \}$ as the hypercube to which $\bix$ belongs.

For $\bix \in \bbR^E$,
we define $\caD(\bix)$ as the distribution from which we sample $\bar{\bix}$ such that $\bar{\bix}(i) = \lfloor \bix(i) \rfloor$ with probability $1- \langle \bix(i)\rangle$ and $\bar{\bix}(i) = \lceil \bix (i)\rceil $ with probability $\langle \bix(i) \rangle$, for each $i\in E$.
We define the continuous extension $F:\bbR_+^E \to \bbR_+$ of $f:\bbZ_+^E \to \bbR_+$ as follows.
For each $\bix \in \bbR_+^E$,
we define
\begin{align}\label{eq:ext_polym}
  F(\bix)
  := \E_{\bar{\bix} \sim \caD(\bix)} [f(\bar{\bix})]
  = \sum_{S \subseteq E}f(\lfloor \bix\rfloor + \bichi_S) \prod_{i \in S}\langle \bix(i) \rangle \prod_{i \not \in S}(1 - \langle \bix(i) \rangle).
  \end{align}
  We call this type of continuous extension \emph{the continuous extension for polymatroid constraints}.
Note that $F$ is obtained by gluing the multilinear extension of $f$ restricted to each hypercube.
If $f:\bit^E \to \bbR_+$ is a monotone submodular function,
it is known that its multilinear extension is monotone and concave along non-negative directions.
We can show similar properties for the continuous extension of a function $f:\bbZ_+^E \to \bbR_+$ if $f$ is monotone and DR-submodular.

\begin{lemma}\label{lem:monotone+concave}
    For a monotone DR-submodular function $f$, the continuous extension $F$ for the polymatroid constraint is a non-decreasing concave function along any line of direction $\bid\geq 0$.
\end{lemma}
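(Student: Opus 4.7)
My plan is to exploit two equivalent descriptions of the extension $F$. First, by a direct computation with Bernoulli sampling, one checks that
\[
  F(\bix) = \E_{\biu \sim [0,1)^E}\bigl[f(\lfloor \bix + \biu \rfloor)\bigr]
\]
with independent coordinates of $\biu$. Equivalently, on each hypercube $C(\bix)$ with corner $\mathbf{k} := \lfloor \bix \rfloor$, $F$ coincides with the multilinear extension of the set function $g_\mathbf{k}:2^E\to\bbR_+$ given by $g_\mathbf{k}(S):=f(\mathbf{k}+\bichi_S)$. Monotonicity along $\bid \geq 0$ is then immediate from the first description: for every realisation of $\biu$, the vector $\lfloor \biy + t\bid + \biu \rfloor$ is coordinate-wise non-decreasing in $t$, so monotonicity of $f$ gives the claim after taking expectation.

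For concavity, set $g(t) := F(\biy + t\bid)$. I would prove (i) concavity of $g$ on every open interval on which $\biy + t\bid$ stays inside a single hypercube, and (ii) that at every boundary point $t_0$ one has $g'(t_0^-) \geq g'(t_0^+)$. For (i), the set function $g_\mathbf{k}$ is easily verified to be monotone submodular: for $S \subseteq T \subseteq E \setminus \{j\}$ the vectors $\mathbf{k}+\bichi_S$ and $\mathbf{k}+\bichi_T$ satisfy $\mathbf{k}+\bichi_S \leq \mathbf{k}+\bichi_T$, so DR-submodularity of $f$ yields $f(\mathbf{k}+\bichi_{S\cup\{j\}}) - f(\mathbf{k}+\bichi_S) \geq f(\mathbf{k}+\bichi_{T\cup\{j\}}) - f(\mathbf{k}+\bichi_T)$. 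The classical fact~\cite{Calinescu2011} that the multilinear extension of a monotone submodular set function is concave along every non-negative direction then gives (i).

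For (ii), let $\biy_0 := \biy + t_0\bid$ and partition $E$ into transitioning coordinates $T := \{i : \bid(i) > 0 \text{ and } \biy_0(i) \in \bbZ_+\}$, non-transitioning moving coordinates $L := \{i : \bid(i) > 0 \text{ and } \biy_0(i) \notin \bbZ\}$, and stationary coordinates $Z := \{i : \bid(i) = 0\}$. Let $\mathbf{k}^R$ (resp.\ $\mathbf{k}^L = \mathbf{k}^R - \bichi_T$) be the corner of the hypercube containing $\biy_0 + \varepsilon\bid$ (resp.\ $\biy_0 - \varepsilon\bid$) for small $\varepsilon > 0$. Differentiating the multilinear formula on each side and using the key identity $\mathbf{k}^L + \bichi_T = \mathbf{k}^R$, a direct calculation shows that for $i \in L$ the partial derivative $\partial_{x_i} F$ is continuous across the boundary, whereas for $i \in T$ one obtains
\[
  \partial_{x_i} F|_{\biy_0^-} - \partial_{x_i} F|_{\biy_0^+}
  = \E_{\bar R}\Bigl[\bigl(f(\mathbf{q}+\bichi_i)-f(\mathbf{q})\bigr) - \bigl(f(\mathbf{q}+2\bichi_i)-f(\mathbf{q}+\bichi_i)\bigr)\Bigr]
\]
for a suitable random vector $\mathbf{q} \geq \bfzero$ (supported on integer points, well defined because $\biy_0(i) \geq 1$ for $i \in T$ whenever the left hypercube is non-empty). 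This expectation is non-negative by the coordinate-wise concave condition, which is precisely the portion of DR-submodularity that lattice submodularity lacks (see the footnote in the introduction). Multiplying by $\bid(i) \geq 0$ and summing over all $i$ (the coordinates in $Z$ contribute $0$) gives $g'(t_0^-) - g'(t_0^+) \geq 0$. Combining (i) and (ii) with continuity of $F$ then yields concavity of $g$ on its whole domain.

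The main obstacle I anticipate is the bookkeeping at the boundary: one must correctly track that the fractional parts of $T$-coordinates approach $1$ from the left and $0$ from the right, and then identify (a) the cancellation that makes $\partial_{x_i}F$ continuous for $i\in L$, which hinges on the identity $\mathbf{k}^L + \bichi_T = \mathbf{k}^R$, and (b) the second-difference structure along coordinate $i$ for $i\in T$. Once these algebraic identifications are carried out, coordinate-wise concavity is the only analytic input needed.
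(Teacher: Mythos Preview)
Your proposal is correct and follows essentially the same strategy as the paper: show concavity of $F$ restricted to each hypercube (where $F$ is the multilinear extension of a monotone submodular set function) and then compare one-sided directional derivatives at cube boundaries via DR-submodularity. The paper wraps the final step through Rockafellar's characterisation of concave functions as integrals of non-increasing functions (Lemma~\ref{lem:Rocka}), whereas you invoke the equivalent elementary fact that piecewise concavity together with $g'(t_0^-)\geq g'(t_0^+)$ at the breakpoints yields global concavity. Your write-up is actually more explicit than the paper's in two respects: you spell out the within-cube concavity step (the paper leaves it implicit when it asserts that the resulting $g$ is monotone), and you handle several coordinates becoming integral at the same parameter value, whereas the paper only displays the computation for a single transitioning coordinate. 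Your monotonicity argument via the coupling $F(\bix)=\E_{\biu\sim[0,1)^E}[f(\lfloor \bix+\biu\rfloor)]$ is a clean alternative to the paper's route of reading off non-negativity of the one-sided derivative.
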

To prove this, we need the following useful fact from convex analysis.

\begin{lemma}[Rockafellar~\cite{Rockafellar1996}, Theorem 24.2]\label{lem:Rocka}
Let $g:\bbR\to\bbR$ be a non-increasing function and $a \in\bbR$. Then
\[
f(x) := \int_a ^ x g(t)dt
\]
is a concave function.
\end{lemma}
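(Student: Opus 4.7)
The plan is to reduce concavity of $f$ to the standard \emph{three-slope inequality}: $f$ is concave on $\bbR$ if and only if for every $x<y<z$,
\[
\frac{f(y)-f(x)}{y-x}\ \geq\ \frac{f(z)-f(y)}{z-y}.
\]
This equivalence is elementary: writing an arbitrary $y\in(x,z)$ as $y=\lambda x+(1-\lambda)z$ with $\lambda=(z-y)/(z-x)\in(0,1)$, clearing denominators in the three-slope inequality rearranges to precisely $f(y)\ge\lambda f(x)+(1-\lambda)f(z)$, the defining inequality of concavity. So it suffices to verify the three-slope inequality for every triple $x<y<z$.

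Next I would exploit the integral representation. Directly from the definition,
\[
\frac{f(y)-f(x)}{y-x}=\frac{1}{y-x}\int_x^y g(t)\,dt, \qquad \frac{f(z)-f(y)}{z-y}=\frac{1}{z-y}\int_y^z g(t)\,dt,
\]
so the two slopes are the mean values of $g$ over $[x,y]$ and $[y,z]$. Since $g$ is non-increasing, $g(t)\ge g(y)$ for every $t\in[x,y]$ and $g(t)\le g(y)$ for every $t\in[y,z]$. Integrating these pointwise bounds and dividing by the interval lengths yields that the first mean is at least $g(y)$ while the second is at most $g(y)$; chaining the two estimates through the common value $g(y)$ gives the three-slope inequality.

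The only delicate point is the integrability of $g$. Because every monotone function is Riemann integrable on any bounded interval, the definition of $f$ is unambiguous and the elementary Riemann-integral monotonicity bounds used above are valid without any continuity hypothesis on $g$; at a potential discontinuity of $g$ at $y$ the inequality $g(t)\ge g(y)$ for $t\le y$ and $g(t)\le g(y)$ for $t\ge y$ still holds since $g$ is non-increasing. Hence there is no substantial obstacle: the argument is a few lines of elementary real analysis, and I will include it only to keep the paper self-contained.
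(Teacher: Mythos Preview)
Your argument is correct. Note, however, that the paper does not supply its own proof of this lemma: it is quoted verbatim as Theorem~24.2 of Rockafellar's \emph{Convex Analysis} and used as a black box in the proof of Lemma~\ref{lem:monotone+concave}. There is therefore nothing to compare against. Your route via the three-slope characterization of concavity and the sandwich $\frac{1}{y-x}\int_x^y g\ge g(y)\ge \frac{1}{z-y}\int_y^z g$ is the standard elementary proof, and your remark that monotonicity alone guarantees Riemann integrability on bounded intervals handles the only possible subtlety.
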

We will use the following notation from~\cite{Rockafellar1996}; For a function $\phi:\bbR\to\bbR$ and $a\in\bbR$, let us define
\begin{align*}
\phi'_+(a) := \lim_{\epsilon\downarrow 0}\frac{\phi(a + \epsilon) - \phi(a)}{\epsilon} \quad \text{and} \quad
\phi'_-(a) := \lim_{\epsilon\uparrow0}\frac{\phi(a + \epsilon) - \phi(a)}{\epsilon},
\end{align*}
if the limits exist.
For a multivariate function $F:\bbR^E \to\bbR$, $\nabla_+ F$ and $\nabla_- F$ are defined in a similar manner. 
For $i \in E$, the $i$-th entry of $\nabla_+ F$ equals $\phi'_+$ where $\phi = \frac{\partial F}{\partial x_i}$ and $\nabla_- F$ is defined accordingly.

\begin{proof}[of Lemma~\ref{lem:monotone+concave}]
Let $\bip \in \bbR^E_+$ and $\phi(\xi) := f(\bip + \xi \bid)$ for $\xi\geq 0$. To show that $\phi$ is a non-decreasing concave function, we will find a non-increasing non-negative function $g$ such that $\phi(\xi) = \phi(0) + \int_0^\xi g(t)dt$. Note that if $\phi$ is differentiable for all $\xi\geq 0$, then $g:=\phi'$ satisfies the condition. However, $\phi$ may be non-differentiable at $\xi$ if $\bip+\xi\bid$ contains an integral entry.

Let us denote $\bix:= \bip+\xi_0\bid$ and suppose that $x_i \in \bbZ$ for some $i$.
Then one can check that $\phi_-'(\xi_0) = (\nabla F_- \rvert_{\xi_0} )^\top \bid$ and $\phi_+'(\xi_0) = (\nabla F_+ \rvert_{\xi_0})^\top \bid$.
For each $j$,
  \begin{align*}
      \nabla F_-(j) \rvert_{\xi_0} & = \sum_{S: i \not \in S}f( \bichi_i \mid \lfloor \bix \rfloor - \bichi_i + \bichi_S)\prod_{j \in S} \langle \bix(j) \rangle \prod_{j \not \in S, j \neq i}(1 - \langle \bix(j) \rangle), \\
      \nabla F_+(j) \rvert_{\xi_0}  & = \sum_{S: i \not \in S}f(\bichi_i \mid \lfloor \bix \rfloor + \bichi_S)\prod_{j \in S} \langle \bix(j) \rangle \prod_{j \not \in S, j \neq i}(1 - \langle \bix(j) \rangle).
  \end{align*}
On the other hand, from DR-submodularity, 
  \[
     f(\bichi_i \mid \lfloor \bix \rfloor - \bichi_i +  \bichi_{S}) \geq f(\bichi_i \mid \lfloor \bix \rfloor + \bichi_S).
  \]
Since $\bid\geq\bfzero$,  we have $\phi'_-(\xi_0) \geq \phi'_+(\xi_0)$.

Let $g:\bbR_+ \to \bbR$ be an arbitrary function such that $\phi'_-(\xi)\geq g(\xi)\geq \phi'_+(\xi)$ for $\xi\geq 0$. Note that $g(\xi) = \phi'(\xi)$ if $\phi$ is differentiable at $\xi$. Then $g$ is non-decreasing and non-negative.
Since $g(\xi) = \phi'(\xi)$ almost everywhere, $\phi(\xi) = \phi(0) + \int_0^\xi g(t)dt$. The lemma now directly follows from the non-negativity of $g$ and Lemma~\ref{lem:Rocka}.
\qed
\end{proof}

For a continuous extension $F$ for polymatroid constraints and $\bix,\biy \in \bbR^E$, we define $F(\bix \mid \biy) = F(\bix + \biy) - F(\biy)$.
The following is immediate.
\begin{proposition}
  If $\bix \in \bbZ_+^E$, then $F(\bix \mid \biy) = \E_{\biz \sim \caD(\biy)}[f(\bix \mid \biz)]$.
\end{proposition}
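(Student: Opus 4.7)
The plan is to unfold the definition of $F$ and observe that adding the integral vector $\bix$ preserves the fractional parts of $\biy$, so the sampling distribution $\caD(\bix+\biy)$ is just $\caD(\biy)$ shifted by $\bix$.

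First I would note that for each coordinate $i \in E$, since $\bix(i) \in \bbZ_+$, we have $\lfloor \bix(i) + \biy(i)\rfloor = \bix(i) + \lfloor \biy(i)\rfloor$ and $\langle \bix(i) + \biy(i)\rangle = \langle \biy(i)\rangle$. Consequently, the distribution $\caD(\bix+\biy)$ is precisely the push-forward of $\caD(\biy)$ under the translation $\biz \mapsto \bix + \biz$: if $\biz \sim \caD(\biy)$, then $\bix + \biz \sim \caD(\bix+\biy)$, because both the support points and the corresponding probabilities agree coordinatewise.

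Next I would apply this observation to the definition~\eqref{eq:ext_polym}. By the change-of-variables above,
\[
F(\bix + \biy) = \E_{\bar{\biw}\sim\caD(\bix+\biy)}[f(\bar{\biw})] = \E_{\biz\sim\caD(\biy)}[f(\bix + \biz)].
\]
On the other hand, $F(\biy) = \E_{\biz \sim \caD(\biy)}[f(\biz)]$ directly from~\eqref{eq:ext_polym}. Subtracting these two identities and using linearity of expectation gives
\[
F(\bix \mid \biy) = \E_{\biz \sim \caD(\biy)}[f(\bix + \biz) - f(\biz)] = \E_{\biz \sim \caD(\biy)}[f(\bix \mid \biz)],
\]
which is the desired equality. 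There is no genuine obstacle here: the whole content is the observation that integer shifts commute with the rounding procedure defining $\caD$, so the proposition reduces to linearity of expectation. The only minor care needed is to verify coordinatewise that the probabilities match, which is immediate from $\langle \bix(i)+\biy(i)\rangle = \langle \biy(i)\rangle$.
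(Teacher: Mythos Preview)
Your argument is correct and matches the paper's approach exactly: the paper simply declares the proposition ``immediate'' without writing out a proof, and what you have done is precisely the computation that makes it immediate --- integer shifts preserve fractional parts, so $\caD(\bix+\biy)$ is the translate of $\caD(\biy)$ by $\bix$, and the result follows by linearity of expectation.
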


\subsection{Continuous greedy algorithm for polymatroid constraints}\label{subsec:algorithm-polymatroid}
In this section, we describe and analyze our algorithm for the polymatroid constraint, whose pseudocode description is presented in Algorithm~\ref{alg:Accelerated-Cont-Greedy}.

At a high level, our algorithm computes a sequence $\bix^0 = \bfzero, \dots, \bix^{1/\epsilon}$ in $P$ (recall that we have assumed that $1/\epsilon$ is an integer).
For each time step $t \in [1/\epsilon]$, given $\bix^{t-1}$, we determine a direction $\biy^t$ by calling a subroutine \textsf{DirectionPolymatroid}, and we update as $\bix^{t} := \bix^{t-1} + \epsilon \biy^t$.
Intuitively speaking, $\biy^t$ is the direction that maximizes the marginal gain while keeping $\bix^t$ in $P$.
Finally, we execute a rounding algorithm \textsf{RoundingPolymatroid} to the fractional solution $\bix^{1/\epsilon}$, and obtain an integral solution $\bar{\bix}$.
The detailed description on \textsf{DirectionPolymatroid} and \textsf{RoundingPolymatroid} are given in Sections~\ref{subsubsec:gradient-polymatroid} and~\ref{subsubsec:rounding-polymatroid}, respectively.

\begin{algorithm}[t]
    \begin{algorithmic}[1]
  \caption{Polymatroid Constraint/DR-Submodular}\label{alg:Accelerated-Cont-Greedy}
  \REQUIRE{$f:\bbZ_+^E \to \bbR_+, P \subseteq \bbR_+^E$, and $\epsilon > 0$.}
  \ENSURE{A vector $\bar{\bix} \in P \cap \bbZ_+^E$.}
  \STATE{$\bix^0 \leftarrow \bfzero$.}
  \FOR{($t \leftarrow 1$; $t \leq \lfloor \frac{1}{\epsilon}\rfloor$; $t\leftarrow t+1$)}
    \STATE{$\biy^t \leftarrow \textsf{DirectionPolymatroid}(f, \bix^{t-1}, \epsilon, P)$.}
    \STATE{$\bix^t \leftarrow \bix^{t-1} + \epsilon \biy^t$.}
  \ENDFOR
  \STATE{$\bar{\bix} \leftarrow \textsf{RoundingPolymatroid}(\bix^{1/\epsilon}, P)$.}
  \RETURN{$\bar{\bix}$.}
  \end{algorithmic}
\end{algorithm}


\subsubsection{Computing an approximation to the gradient}\label{subsubsec:gradient-polymatroid}

\begin{algorithm}[t]
  \caption{\textsf{DirectionPolymatroid}$(f,\bix,\epsilon,P)$}\label{alg:polymatroid-decreasing-threshold}
  \begin{algorithmic}[1]
  \REQUIRE{$f:\bbZ_+^E \to \bbR_+$, $\bix\in \bbR_+^E$, $\epsilon\in [0,1]$, $P\subseteq \bbR_+^E$.}
  \ENSURE{A vector $\biy \in P\cap \bbZ_+^E$.}
  \STATE{$\biy \leftarrow \bfzero$ and $d \leftarrow \max_{e\in E} f(\bichi_e)$. }
  \STATE{$N \leftarrow $ the solution to $N = n \lceil \log_{1/1-\epsilon}\frac{N}{\epsilon} \rceil$. Note that $N = O(\frac{n}{\epsilon}\log \frac{n}{\epsilon})$.}
  \FOR{($\theta=d$; $\theta \geq  \frac{\epsilon d}{N}$; $\theta \leftarrow \theta(1- \epsilon))$}
    \FOR{all $e \in E$}
      \STATE{$k_{\max} \leftarrow \max\{k:\bix + \biy + k\bichi_e \in P \}$.}
      \STATE{$k \leftarrow$ \textsf{BinarySearchPolymatroid}$(f,\bix,e,\theta,\alpha_{\ref{alg:polymatroid-decreasing-threshold}},\beta_{\ref{alg:polymatroid-decreasing-threshold}},\delta_{\ref{alg:polymatroid-decreasing-threshold}},k_{\max})$.
      }
      \IF{$k \geq 1$}
        \STATE{$\biy \leftarrow \biy + k\bichi_e$.}
      \ENDIF
    \ENDFOR
  \ENDFOR
  \RETURN $\biy$.
  \end{algorithmic}
\end{algorithm}
\begin{algorithm}[t]
  \caption{\textsf{BinarySearchPolymatroid}$(f,\bix,e,\theta,\alpha,\beta,\delta,k_{\max})$}\label{alg:binary-search-polymatroid}
  \begin{algorithmic}[1]
      \REQUIRE{$f:\bbZ_+^E \to \bbR_+$, $\bix \in \bbR_+^E$, $e \in E$, $\theta \in \bbR_+$, $\alpha \in (0,1/2)$, $\beta,\delta \in (0,1)$, $k_{\max} \in \bbZ_+$}
    \ENSURE{$k \in \bbZ_+$.}
    \STATE{$\ell \leftarrow 1$, $u \leftarrow k_{\max}$.}
    \WHILE{$\ell < u$}
    \STATE{$m = \lfloor \frac{\ell + u}{2} \rfloor$.}
    \STATE{$\widetilde{F}(m\bichi_e \mid \bix) \leftarrow$ the estimate of $F(m\bichi_e \mid \bix)$ obtained by averaging $O\left(\frac{\log (k_{\max} /\delta)}{\alpha \beta}\right)$ random samples.}
    \IF{$\widetilde{F}(m\bichi_e \mid \bix) \geq m\theta$}
    \STATE{$\ell \leftarrow m + 1$.}\label{line:binary-search-update-l}
    \ELSE
    \STATE{$u \leftarrow m$.}\label{line:binary-search-update-r}
    \ENDIF
    \ENDWHILE
    \STATE{$k \leftarrow \ell - 1$.}
    \RETURN $k$.
  \end{algorithmic}
\end{algorithm}

As mentioned above, given a vector $\bix \in P$, we want to compute a direction $\biy$ that maximizes the marginal gain $F(\epsilon \biy \mid \bix)$ and satisfies $\bix + \epsilon \biy \in P$.
In order to efficiently approximate such $\biy$, again we use the decreasing-threshold greedy framework.
The pseudocode of \textsf{DirectionPolymatroid} is given in Algorithm~\ref{alg:polymatroid-decreasing-threshold}.
The parameters $\alpha_{\ref{alg:polymatroid-decreasing-threshold}}$, $\beta_{\ref{alg:polymatroid-decreasing-threshold}}$, and $\delta_{\ref{alg:polymatroid-decreasing-threshold}}$ will be determined later.

Recall that, in the framework, we want to determine how many copies of an element can be added when the average marginal gain is restricted to be as large as the current threshold.
Given $\bix \in \bbR_+^E$, $e \in E$, and $\theta \in \bbR_+$, \textsf{BinarySearchPolymatroid} (Algorithm~\ref{alg:binary-search-polymatroid}) performs binary search to find the largest $k$ such that $F(k\bichi_e \mid \bix) \geq \theta$.
Since we can only estimate the value of $F(\cdot)$ in polynomial time, the output $k$ of Algorithm~\ref{alg:binary-search-polymatroid} only satisfies the following weak conditions:
\begin{lemma}\label{lem:binary-search-polymatroid}
  Algorithm~\ref{alg:binary-search-polymatroid} satisfies the following properties:
  \begin{itemize}
  \item[(1)] Suppose that Algorithm~\ref{alg:binary-search-polymatroid} outputs $k \in \bbZ_+$.
    Then, with probability at least $ 1- \delta$,
    \[
      F(k\bichi_e \mid \bix)  \geq (1-\alpha)k\theta - \beta f(k \bichi_e)
    \]
    and
    \[
      F((k+1)\bichi_e \mid \bix)  < \frac{(k+1)\theta}{1-\alpha} + 2\beta f((k+1)\bichi_e).
    \]
  \item[(2)] Suppose that Algorithm~\ref{alg:binary-search-polymatroid} outputs $k \in \bbZ_+$.
    Then, with probability at least $ 1- \delta$, for any $k < k' \leq k_{\max}$,
    \[
      F(k'\bichi_e \mid \bix)  < \frac{k'\theta}{1-\alpha} + 2\beta f(k_{\max}\bichi_e).
    \]
\item[(3)] Algorithm~\ref{alg:binary-search-polymatroid} runs in $O(\log k_{\max} \cdot \frac{\log (k_{\max}/\delta)}{\alpha \beta} )$ time.
  \end{itemize}

\end{lemma}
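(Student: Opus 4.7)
The proof is based on applying Chernoff's bound (Lemma~\ref{lem:chernoff}) to the sample-based estimates and combining this with concavity of the continuous extension along coordinate directions (Lemma~\ref{lem:monotone+concave}).

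First, I would control the estimation error for each individual query. For a fixed $m$, each sample has the form $f(m\bichi_e \mid \bar{\bix}_i)$ for $\bar{\bix}_i \sim \caD(\bix)$, and by DR-submodularity together with monotonicity these values lie in $[0, f(m\bichi_e)]$. After rescaling to $[0,1]$, the Relative+Additive Chernoff bound with $\Theta(\log(k_{\max}/\delta)/(\alpha\beta))$ samples yields
\[
    (1-\alpha) F(m\bichi_e \mid \bix) - \beta f(m\bichi_e) \leq \widetilde{F}(m\bichi_e \mid \bix) \leq (1+\alpha) F(m\bichi_e \mid \bix) + \beta f(m\bichi_e)
\]
with failure probability at most $\delta/k_{\max}$. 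A union bound over the $O(\log k_{\max})$ indices $m$ actually examined during the binary search then gives overall failure probability at most $\delta$, and throughout the remainder I would condition on this good event.

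Part~(1) will follow by inspecting the terminal state of the binary search ($\ell=u=k+1$). The update $\ell\leftarrow m+1$ that last set $\ell = k+1$ must have been triggered by $\widetilde{F}(k\bichi_e \mid \bix) \geq k\theta$; inverting the upper Chernoff inequality (and using $1/(1+\alpha)\geq 1-\alpha$) then gives the first bound. Dually, the update $u\leftarrow m$ that set $u = k+1$ was triggered by $\widetilde{F}((k+1)\bichi_e \mid \bix) < (k+1)\theta$; inverting the lower side (using $1/(1-\alpha) < 2$ since $\alpha < 1/2$) yields the second bound. Corner cases ($k=0$, or $u$ never updated because the search saturated at $k_{\max}$) are handled separately, using $F(\bfzero\mid\bix)=0$ and the trivial bound $F(k\bichi_e\mid\bix)\leq f(k\bichi_e)$ respectively. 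Part~(3) is then a direct count: the binary search performs $O(\log k_{\max})$ iterations, each taking $O(\log(k_{\max}/\delta)/(\alpha\beta))$ sampling work.

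For part~(2), the case $k' = k+1$ follows from part~(1) combined with monotonicity ($f((k+1)\bichi_e) \leq f(k_{\max}\bichi_e)$). For $k' > k+1$, I would invoke Lemma~\ref{lem:monotone+concave}: since $t \mapsto F(\bix + t\bichi_e)$ is concave for $t\geq 0$ and $F(\bix + 0\cdot\bichi_e) - F(\bix) = 0$, the ratio $F(t\bichi_e \mid \bix)/t$ is non-increasing in $t>0$. Hence $F(k'\bichi_e \mid \bix)/k' \leq F((k+1)\bichi_e \mid \bix)/(k+1)$, and substituting the bound from part~(1) produces the claimed inequality. The main obstacle is exactly part~(2): the bound must hold for every $k' \in (k, k_{\max}]$, even though only $O(\log k_{\max})$ indices are ever queried; identifying the concavity of $F$ along $\bichi_e$ (rather than, say, some pointwise Chernoff argument) as the right tool to extrapolate the bound from $k+1$ across the rest of the range is the key insight.
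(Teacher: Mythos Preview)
Your overall approach matches the paper's proof exactly: Chernoff (Lemma~\ref{lem:chernoff}) for each estimate, a union bound over the $O(\log k_{\max})$ queries, the terminal-state analysis of the binary search for part~(1), concavity of $F$ along $\bichi_e$ (Lemma~\ref{lem:monotone+concave}) to pass from $k+1$ to all $k'>k$ in part~(2), and a direct count for part~(3). Indeed the paper's own argument for~(2) is the single line ``(2) holds due to (1) and the concavity of $F$,'' so you have correctly reconstructed the intended route.

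One caution about your final substitution in part~(2). From $\frac{F(k'\bichi_e\mid\bix)}{k'}\le \frac{F((k+1)\bichi_e\mid\bix)}{k+1}$ together with the second inequality of~(1) you only get
\[
F(k'\bichi_e\mid\bix)\;<\;\frac{k'\theta}{1-\alpha}\;+\;\frac{k'}{k+1}\cdot 2\beta\, f((k+1)\bichi_e),
\]
and the extra factor $k'/(k+1)$ can be as large as $k_{\max}/(k+1)$; monotonicity of $f$ does not reduce $\frac{k'}{k+1}f((k+1)\bichi_e)$ to $f(k_{\max}\bichi_e)$, and coordinate-wise concavity of $f$ gives the inequality in the wrong direction. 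So the sentence ``substituting the bound from part~(1) produces the claimed inequality'' does not go through as stated. The paper's terse justification glosses over exactly the same point, so you have faithfully reproduced its argument, but be aware that this step needs more care (for instance, one can first separate the case $F((k+1)\bichi_e\mid\bix)\le\frac{(k+1)\theta}{1-\alpha}$, where concavity with $F(0\mid\bix)=0$ immediately gives $F(k'\bichi_e\mid\bix)\le\frac{k'\theta}{1-\alpha}$; the residual case then needs a sharper treatment of the additive error).
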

\begin{proof}
  (1)
  For $k \in \bbZ$,
  let $s_m = F(m\bichi_e \mid \bix) = \E_{\biz \sim \caD(\bix)}[f(m\bichi_e \mid \biz)]$ and $\widetilde{s}_m$ be its estimation.
  Note that $0 \leq f(m\bichi_e \mid \biz) \leq f( m \bichi_e)$ for any $\biz$ in the support of $\caD(\bix)$.
  By Lemma~\ref{lem:chernoff} and the union bound,
  with probability at least $1 - \delta$,
  \begin{align*}
    |\widetilde{s}_{m} - s_{m}| & \leq \alpha s_{m} + \beta f( m \bichi_e).
  \end{align*}
  for every estimation in the process.
  In what follows, we suppose this happens.

  Suppose $s_{m} < \frac{1}{1+\alpha}(m \theta- \beta f( m \bichi_e))$.
  Then, $\widetilde{s}_{m} < m \theta$ and we reach Line~\ref{line:binary-search-update-r}.
  On the other hand,
  Suppose $s_{m} \geq \frac{1}{1-\alpha}(m \theta + \beta f( m \bichi_e))$.
  Then, $\widetilde{s}_{m} \geq m \theta$ and we reach Line~\ref{line:binary-search-update-l}.
  Hence after the while loop,
  \begin{align*}
    s_{\ell-1} & \geq \frac{1}{1+\alpha} ((\ell-1) \theta - \beta f( (\ell-1) \bichi_e)) \geq (1-\alpha)(\ell-1) \theta - \beta f( (\ell-1) \bichi_e) \\
    s_{\ell} & < \frac{1}{1-\alpha} (\ell \theta +\beta f( \ell \bichi_e)) = \frac{\ell \theta}{1-\alpha}  + \frac{\beta f( \ell \bichi_e)}{1-\alpha}
    \leq \frac{\ell \theta}{1-\alpha}  + 2\beta f( \ell \bichi_e),
  \end{align*}
  where the last inequality uses the fact that $\alpha < 1/2$.
  Hence, we obtain~(1), (2) holds due to (1) and the concavity of $F$, and (3) is obvious.
  \qed
\end{proof}


\begin{lemma}\label{lem:one-step-improvement}
  Let $\bix^*$ be an optimal solution.
  \textsf{DirectionPolymatroid}$(f,\bix,\epsilon,P)$ produces a vector $\biy \in \bbR_+^E$ such that $\bix' := \bix+\biy$ satisfies
  \[
    F(\bix') - F(\bix) \geq \epsilon\Bigl((1 - 5\epsilon)f(\bix^*) - F(\bix')\Bigr)
  \]
  with probability at least $1-\epsilon/3$.
\end{lemma}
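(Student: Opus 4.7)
The plan is to adapt the classical continuous-greedy analysis to the decreasing-threshold framework and the polymatroid setting. Let $\biy^t \in P \cap \bbZ_+^E$ be the integral direction returned by \textsf{DirectionPolymatroid}; the lemma's vector $\biy$ is then the effective update $\epsilon \biy^t$, so $\bix' = \bix + \epsilon \biy^t$. First I would condition on the event $\caK$ that every call to \textsf{BinarySearchPolymatroid} made inside \textsf{DirectionPolymatroid} satisfies both parts of Lemma~\ref{lem:binary-search-polymatroid}. A union bound over the $O(n\epsilon^{-1}\log(n/\epsilon))$ such calls, with $\delta_{\ref{alg:polymatroid-decreasing-threshold}}$ chosen small enough (e.g.\ $\Theta(\epsilon^2/(n\log(n/\epsilon)))$), gives $\Pr[\caK] \geq 1 - \epsilon/3$.

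The heart of the argument is a ``full-size'' direction bound valid on $\caK$:
\[
F(\bix + \biy^t) - F(\bix) \;\geq\; (1 - O(\epsilon))\bigl(F(\bix \vee \bix^*) - F(\bix)\bigr) - O(\epsilon)\, f(\bix^*).
\]
To prove it I would apply Lemma~\ref{lem:base-mapping}, completing $\biy^t$ and $\bix^*$ to bases of a common base polytope if necessary, to obtain a map $\phi$ pairing each unit added to $\biy^t$ with a distinct unit of $\bix^*$. For a copy added as part of a block $k \bichi_e$ at threshold $\theta$, Lemma~\ref{lem:binary-search-polymatroid}(1) lower-bounds its average marginal gain by $(1-\alpha_{\ref{alg:polymatroid-decreasing-threshold}})\theta - O(\beta_{\ref{alg:polymatroid-decreasing-threshold}})\, f(k\bichi_e)$. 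For its partner copy of some $a \in \supp^+(\bix^*)$, Lemma~\ref{lem:binary-search-polymatroid}(2) applied at the last threshold $\theta'$ at which $a$ was examined upper-bounds its marginal at $\bix + \biy^t$ by $\theta'/(1-\alpha_{\ref{alg:polymatroid-decreasing-threshold}}) + O(\beta_{\ref{alg:polymatroid-decreasing-threshold}})$. Consecutive thresholds differ by a factor of $1-\epsilon$, so the two align up to $(1-\epsilon)^2/(1-\alpha_{\ref{alg:polymatroid-decreasing-threshold}})^2 = 1 - O(\epsilon)$. Summing the paired inequalities, and using DR-submodularity of $F$ (obtained via a coordinate-wise monotone coupling of $\caD(\bix)$ and $\caD(\biy)$ for $\bix \leq \biy$ applied to the DR-submodular $f$) to upper-bound $F(\bix \vee \bix^* \mid \bix + \biy^t)$ by $\sum_a F(\bix^*(a)\bichi_a \mid \bix + \biy^t)$, yields the displayed bound; any unpaired contribution from the threshold floor $\epsilon d / N$ is absorbed into the $O(\epsilon)\, f(\bix^*)$ slack.

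Finally, Lemma~\ref{lem:monotone+concave} makes $F$ concave along the non-negative direction $\biy^t$, so $F(\bix + \epsilon\biy^t) - F(\bix) \geq \epsilon \bigl(F(\bix+\biy^t) - F(\bix)\bigr)$; combined with the direction bound and with $F(\bix \vee \bix^*) \geq f(\bix^*)$ (monotonicity), this gives $F(\bix') - F(\bix) \geq \epsilon(1 - O(\epsilon))(f(\bix^*) - F(\bix)) - O(\epsilon^2)\, f(\bix^*)$. Tuning $\alpha_{\ref{alg:polymatroid-decreasing-threshold}} = \beta_{\ref{alg:polymatroid-decreasing-threshold}} = \Theta(\epsilon)$ and using $F(\bix') \geq F(\bix)$ to replace $-F(\bix)$ with $-F(\bix')$ on the right-hand side, the slack collapses into the stated $5\epsilon$. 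The main obstacle is the direction-optimality step: propagating the $(\alpha_{\ref{alg:polymatroid-decreasing-threshold}}, \beta_{\ref{alg:polymatroid-decreasing-threshold}})$ errors of the randomised binary search through the geometric threshold schedule and setting up the exchange map so that the pointwise gain-versus-loss inequalities sum coherently, despite $\biy^t$ and $\bix^*$ possibly having different total weights and needing completion to bases.
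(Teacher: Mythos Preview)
Your overall architecture matches the paper's, but there is a genuine gap in the ``full-size direction bound'' step, and it is exactly the point you flag as the main obstacle.

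You claim $F(\bix+\biy^t)-F(\bix)\ge(1-O(\epsilon))\bigl(F(\bix\vee\bix^*)-F(\bix)\bigr)-O(\epsilon)f(\bix^*)$, with $F(\bix)$ subtracted on the right. But your sketch evaluates the partner marginals ``at $\bix+\biy^t$'': summing the paired inequalities and applying DR-submodularity at that state yields $\sum_a F(\bix^*(a)\bichi_a\mid\bix+\biy^t)\ge f(\bix^*)-F(\bix+\biy^t)$, so what you actually obtain is
\[
F(\bix+\biy^t)-F(\bix)\;\ge\;(1-O(\epsilon))\bigl(f(\bix^*)-F(\bix+\biy^t)\bigr)-O(\epsilon)f(\bix^*).
\]
After applying concavity once, the right-hand side carries $-F(\bix+\biy^t)$, not $-F(\bix)$; since $F(\bix+\biy^t)\ge F(\bix')$, you cannot replace it by $-F(\bix')$. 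Working through the inequalities, this route gives only about $\tfrac{\epsilon}{2}\bigl(f(\bix^*)-F(\bix)\bigr)$ of progress per step, which is not enough for the stated $1-1/e$ ratio. The paper avoids this by applying concavity \emph{block by block}: it proves $F(k_i\bichi_{b_i}\mid\bix_{i-1})\ge(1-\epsilon)^3\sum_e F(k^*_{i,e}\bichi_e\mid\bix_{i-1})$ first, then scales each block by $\epsilon$ via concavity, and only then telescopes at the $\epsilon$-scale and invokes DR-submodularity. That ordering is what makes the subtracted term come out as $F(\bix')$ rather than $F(\bix+\biy^t)$.

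A secondary issue: the additive $\beta$-errors from Lemma~\ref{lem:binary-search-polymatroid} accumulate over all $N$ blocks and all $n$ partner elements, contributing roughly $Nn\,\beta\, f(\bix^*)$ in total. Taking $\beta_{\ref{alg:polymatroid-decreasing-threshold}}=\Theta(\epsilon)$ as you propose makes this blow up; the paper sets $\beta_{\ref{alg:polymatroid-decreasing-threshold}}=\epsilon/\bigl(2N(n+1)\bigr)$ precisely so that the accumulated error stays $O(\epsilon)f(\bix^*)$.
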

\begin{proof}
  We choose $\alpha_{\ref{alg:polymatroid-decreasing-threshold}} = \epsilon$,
  $\beta_{\ref{alg:polymatroid-decreasing-threshold}} = \frac{\epsilon}{2N(n+1)}$,
  and $\delta_{\ref{alg:polymatroid-decreasing-threshold}} = \frac{\epsilon}{3N}$.
  With probability at least $1-\epsilon/3$,
  all the binary searches succeed.
  In what follows, we assume that this has happened.
  Assume for now that \textsf{DirectionPolymatroid} returns $\biy$ with $\biy(E) = r$.
  If not, we add dummy elements of value $0$ so that $\biy(E) = r$.


  Let $\biy_i$ be the vector $\biy$ after the $i$-th update in the execution of \textsf{DirectionPolymatroid}.
  We define $\biy_0 = \bfzero$.
  Let $b_i$ and $k_i$ be the element of $E$ and the step size chosen in the $i$-th update, respectively; i.e., $\biy_i = \biy_{i-1} + k_i\bichi_{b_i}$.
  Applying Lemma~\ref{lem:base-mapping} to $\biy$ and $\bix^*$, we obtain a mapping $\phi:I(\biy)\to\supp^+(\bix^*)$.
  Let $e^*_{i,j} := \phi(b_i, \biy_{i-1}(b_i) + j)$ for $j = 1, \dots, k_i$.
  Then, by construction, $\biy_{i-1} - \bichi_{b_i} + \bichi_{e^*_{i,j}} \in P$ for $j = 1, \dots, k_i$.

  For each $i \in \bbN$, let $\theta_i$ be the threshold used in the $i$-th update and let $\bix_i = \bix + \biy_{i}$.
  By~(1) of Lemma~\ref{lem:binary-search-polymatroid}, 
  \begin{align}
    F(k_i\bichi_{b_i}\mid\bix_{i-1}) \geq (1-\epsilon)k_i \theta_i - \beta_{\ref{alg:polymatroid-decreasing-threshold}} f(k_i\bichi_{b_i}) - \frac{\epsilon d}{N},
    \label{eq:gradient-polymatroid-1}
  \end{align}
  where the last term $\frac{\epsilon d}{N}$ comes from the case that $b_i$ is a dummy element.

  For each $i \in \bbN$ and $e \in E$, we define $k^*_{i,e} \in \bbZ_+$ as the number of the occurrence of $e$ in the sequence $e^*_{i,1},\ldots,e^*_{i,k_i}$.
  When the vector $k_i\bichi_{b_i}$ is added, any $e \in E$ has been a candidate element when the threshold was $\frac{\theta_i}{1-\epsilon}$ (This is a valid argument only when $i > 1$.
  We consider the case of $i = 1$ presently).
  Thus by~(2) of Lemma~\ref{lem:binary-search-polymatroid},
  for each $e \in E$,
  \[
    F(k^*_{i,e}\bichi_{e} \mid\bix_{i-1}) \leq \frac{k^*_{i,e}\theta_{i}}{(1-\epsilon)^2} + 2\beta_{\ref{alg:polymatroid-decreasing-threshold}} f(k_{\max}\bichi_{e}).
  \]
  This inequality also holds when $i = 1$ because $F(k^*_{i,e}\bichi_{e} \mid\bix_{i-1}) = F(k^*_{i,e}\bichi_{e}) \leq k^*_{i,e}d \leq k^*_{i,e}\theta_{1}$.

  Rephrasing the inequality yields
  \[
    \theta_i \geq \frac{(1-\epsilon)^2}{k^*_{i,e}} \Bigl( F(k_{i,e}^* \bichi_e\mid \bix_{i-1}) - 2\beta_{\ref{alg:polymatroid-decreasing-threshold}} f(k_{\max}\bichi_{e}) \Bigr).
  \]
  By averaging over $e \in E$, we have
  \begin{align}
    \theta_i \geq \frac{(1-\epsilon)^2}{k_i}\sum_e \Bigl( F(k_{i,e}^* \bichi_e\mid \bix_{i-1}) - 2\beta_{\ref{alg:polymatroid-decreasing-threshold}} f(k_{\max}\bichi_{e}) \Bigr).
    \label{eq:gradient-polymatroid-2}
  \end{align}

  Combining~\eqref{eq:gradient-polymatroid-1},~\eqref{eq:gradient-polymatroid-2}, and the fact that $f(\bix^*) \geq d$, we obtain
  \begin{align*}
    & F(k_i \bichi_{b_i} \mid \bix_{i-1}) \\
    & \geq (1 - \epsilon)^3 \sum_{e} \Bigl(F(k^*_{i,e}\bichi_{e} \mid \bix_{i-1} ) - 2\beta_{\ref{alg:polymatroid-decreasing-threshold}} f(k_{\max}\bichi_{e})\Bigr) - \beta_{\ref{alg:polymatroid-decreasing-threshold}} f(k_i \bichi_{b_i}) - \frac{\epsilon d}{N} \\
    & = (1 - \epsilon)^3 \sum_{e} F(k^*_{i,e}\bichi_{e} \mid \bix_{i-1} )- 2\beta_{\ref{alg:polymatroid-decreasing-threshold}} \Bigl( f(k_i \bichi_{b_i}) + \sum_{e}f(k_{\max}\bichi_{e})\Bigr) - \frac{\epsilon d}{N}\\
    & \geq (1 - \epsilon)^3 \sum_{e} F(k^*_{i,e}\bichi_{e} \mid \bix_{i-1} )- 2\beta_{\ref{alg:polymatroid-decreasing-threshold}} \Bigl( f(\bix^*) + \sum_{e}f(\bix^*)\Bigr)- \frac{\epsilon d}{N} \\
    & \geq (1 - \epsilon)^3 \sum_{e} F(k^*_{i,e}\bichi_{e} \mid \bix_{i-1} )- \frac{2\epsilon}{N} f(\bix^*)
  \end{align*}
  Since $F$ is concave along non-negative directions,
  for any $\epsilon > 0$,
  \begin{align}
    F(\epsilon k_i \bichi_{b_i} \mid \bix_{i-1}) \geq
    \epsilon(1 - \epsilon)^3 \sum_{e} F(k^*_{i,e}\bichi_{e} \mid \bix_{i-1} )- \frac{2\epsilon^2}{N}f(\bix^*).
    \label{eq:gradient-polymatroid-3}
  \end{align}
  Using the above inequality and the fact that $N$ is an upper bound on the total number of updates, we bound the improvement at each time step as follows:
  \begin{align*}
    & F(\bix') - F(\bix)
    = \sum_i (F(\bix+\epsilon \biy_i) - F(\bix+\epsilon \biy_{i-1}))
    = \sum_i F(\epsilon k_i\bichi_{b_i} \mid \bix_{i-1}) \\
    & \geq
    \sum_i \left(\epsilon(1 - \epsilon)^3 \sum_{e \in E} F(k^*_{i,e}\bichi_{e} \mid \bix_{i-1} )- \frac{2\epsilon^2}{N}f(\bix^*) \right) \tag{By~\eqref{eq:gradient-polymatroid-3}}\\
    & =
    \epsilon(1 - \epsilon)^3 \sum_i\sum_{e \in E} \E_{\biz \sim \caD(\bix_{i-1})}[f(\biz + k^*_{i,e}\bichi_{e}) - f(\biz)]- \sum_{i}\frac{2\epsilon^2}{N} f(\bix^*)  \\
    & \geq \epsilon (1-\epsilon)^3\E_{\biz \sim \caD(\bix')}[f(\biz \vee \bix^*) - f(\biz)]  - 2\epsilon^2 f(\bix^*)  \tag{from DR-submodularity}\\
    & \geq \epsilon \Bigl((1-\epsilon)^3 (f(\bix^*) - F(\bix')) - 2\epsilon f(\bix^*)\Bigr).  \tag{from monotonicity}\\
    & \geq \epsilon \Bigl((1-5\epsilon)f(\bix^*) - F(\bix')\Bigr).
  \end{align*}
    \qed
\end{proof}

\begin{lemma}\label{lem:polymatroid-one-step-time-complexity}
    \textsf{DirectionPolymatroid} runs in time $\widetilde{O}(\frac{n^2N}{\epsilon^3} \log \frac{n}{\epsilon} \log r \log\frac{rN}{\epsilon})$.
\end{lemma}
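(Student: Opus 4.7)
The plan is a straightforward bookkeeping: multiply the number of outer iterations, the number of inner iterations, and the cost of one \textsf{BinarySearchPolymatroid} call, then substitute the parameter choices $\alpha_{\ref{alg:polymatroid-decreasing-threshold}}=\epsilon$, $\beta_{\ref{alg:polymatroid-decreasing-threshold}}=\frac{\epsilon}{2N(n+1)}$, and $\delta_{\ref{alg:polymatroid-decreasing-threshold}}=\frac{\epsilon}{3N}$ (as fixed in the proof of Lemma~\ref{lem:one-step-improvement}).

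First, the outer \textbf{for}-loop runs over $\theta$ from $d$ down to $\epsilon d/N$, decreasing by a factor $(1-\epsilon)$ each step, so it performs $O\bigl(\frac{1}{\epsilon}\log\frac{N}{\epsilon}\bigr)$ iterations. Since $N=O(\frac{n}{\epsilon}\log\frac{n}{\epsilon})$, this is $O(\frac{1}{\epsilon}\log\frac{n}{\epsilon})$. The inner \textbf{for}-loop iterates over all $e\in E$, giving a factor of $n$.

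Next, I bound the cost of a single \textsf{BinarySearchPolymatroid} call using part~(3) of Lemma~\ref{lem:binary-search-polymatroid}: the running time is $O\!\left(\log k_{\max}\cdot\frac{\log(k_{\max}/\delta_{\ref{alg:polymatroid-decreasing-threshold}})}{\alpha_{\ref{alg:polymatroid-decreasing-threshold}}\beta_{\ref{alg:polymatroid-decreasing-threshold}}}\right)$. Here $k_{\max}\leq r$, since $k_{\max}=\max\{k:\bix+\biy+k\bichi_e\in P\}$ and every point of $P$ has coordinate-sum at most $r=\rho(E)$. Substituting the parameters gives
\[
O\!\left(\log r\cdot\frac{\log(rN/\epsilon)}{\epsilon\cdot\frac{\epsilon}{2N(n+1)}}\right)
=O\!\left(\frac{nN}{\epsilon^{2}}\log r\,\log\frac{rN}{\epsilon}\right).
\]

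Finally, multiplying the three factors together yields
\[
O\!\left(\frac{1}{\epsilon}\log\frac{n}{\epsilon}\right)\cdot n\cdot O\!\left(\frac{nN}{\epsilon^{2}}\log r\,\log\frac{rN}{\epsilon}\right)
=\widetilde{O}\!\left(\frac{n^{2}N}{\epsilon^{3}}\log\frac{n}{\epsilon}\,\log r\,\log\frac{rN}{\epsilon}\right),
\]
which is the stated bound. There is no genuinely hard step here; the only mild care is in checking $k_{\max}\leq r$ (which follows from $P$ being a polymatroid of rank $r$) and in using $\log\frac{N}{\epsilon}=O(\log\frac{n}{\epsilon})$ to absorb the outer-loop factor into the $\widetilde{O}$. \qed
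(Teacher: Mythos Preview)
Your proof is correct and follows essentially the same approach as the paper: both multiply the number of outer-loop iterations $\widetilde{O}(\frac{1}{\epsilon}\log\frac{n}{\epsilon})$, the inner-loop factor $n$, and the $O(\frac{nN}{\epsilon^2}\log r\log\frac{rN}{\epsilon})$ cost of a single \textsf{BinarySearchPolymatroid} call. You merely spell out the parameter substitution and the bound $k_{\max}\le r$ more explicitly than the paper does.
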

\begin{proof}
    Algorithm~\ref{alg:binary-search-polymatroid} takes $O(\frac{nN}{\epsilon^2 } \log r \log \frac{r N}{\epsilon} )$ time.
  The outer loop iterates $\widetilde{O}(\frac{1}{\epsilon}\log\frac{n}{\epsilon})$ times, while the inner loop contains the execution of Algorithm~\ref{alg:binary-search-polymatroid} $n$ times.
  \qed
\end{proof}

\begin{lemma}
  At the end of Algorithm~\ref{alg:Accelerated-Cont-Greedy}, $F(\bix)\geq(1-1/e-O(\epsilon))\OPT$ with probability at least $2/3$.
  Moreover the time complexity is $O(\frac{n^3}{\epsilon^5} \log^3 \frac{n}{\epsilon} \log^2 r)$.
\end{lemma}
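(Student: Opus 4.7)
The plan is to combine Lemma~\ref{lem:one-step-improvement}, which quantifies the progress of a single continuous-greedy step, with a union bound over the $1/\epsilon$ iterations and a standard recurrence argument. First I would apply the union bound: by Lemma~\ref{lem:one-step-improvement} each call to \textsf{DirectionPolymatroid} succeeds with probability at least $1-\epsilon/3$, so all $1/\epsilon$ calls succeed simultaneously with probability at least $1-(1/\epsilon)\cdot(\epsilon/3) = 2/3$. Conditioning on this event, for every $t=1,\dots,1/\epsilon$ the fractional iterate satisfies
\[
  F(\bix^{t}) - F(\bix^{t-1}) \;\geq\; \epsilon\bigl((1-5\epsilon)\OPT - F(\bix^{t})\bigr),
\]
where I used $f(\bix^*)=\OPT$ (since $\bix^*\in\bbZ^E_+$ gives $F(\bix^*)=f(\bix^*)$, but we only need the upper bound on $f(\bix^*)$ from optimality).

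Next I would set up the textbook continuous-greedy recurrence. Define the gap $g_t := (1-5\epsilon)\OPT - F(\bix^{t})$. Rearranging the one-step inequality yields $g_{t-1} - g_{t} \geq \epsilon g_{t}$, i.e.\ $g_{t} \leq g_{t-1}/(1+\epsilon)$. Iterating gives
\[
  g_{1/\epsilon} \;\leq\; \frac{g_0}{(1+\epsilon)^{1/\epsilon}} \;\leq\; \frac{\OPT}{(1+\epsilon)^{1/\epsilon}}.
\]
Using $\ln(1+\epsilon)\geq \epsilon-\epsilon^{2}/2$ we obtain $(1+\epsilon)^{1/\epsilon}\geq e^{\,1-\epsilon/2}$, so $g_{1/\epsilon} \leq \OPT/e\cdot e^{\epsilon/2}= \OPT/e + O(\epsilon)\OPT$. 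Unwinding the definition of $g_{1/\epsilon}$ yields
\[
  F(\bix^{1/\epsilon}) \;\geq\; (1-5\epsilon)\OPT - \OPT/e - O(\epsilon)\OPT \;=\; \bigl(1 - 1/e - O(\epsilon)\bigr)\OPT,
\]
which is the desired approximation guarantee (the rounding step does not change the value of $F$ at the returned point; its effect on $f(\bar{\bix})$ will be handled in the subsequent subsection).

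For the running time, I would simply multiply the number of outer iterations by the per-iteration cost. The outer loop of Algorithm~\ref{alg:Accelerated-Cont-Greedy} runs $1/\epsilon$ times, and by Lemma~\ref{lem:polymatroid-one-step-time-complexity} each call to \textsf{DirectionPolymatroid} takes $\widetilde{O}(n^{2}N/\epsilon^{3}\cdot \log(n/\epsilon)\log r\log(rN/\epsilon))$ time. Substituting $N=O((n/\epsilon)\log(n/\epsilon))$ and multiplying by $1/\epsilon$ gives $O(\frac{n^3}{\epsilon^5}\log^3\frac{n}{\epsilon}\log^2 r)$, absorbing the rounding cost into the same bound (this will be checked once \textsf{RoundingPolymatroid} is described).

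The only non-trivial step is the recursion, but it is standard once the one-step lemma is in hand; I do not foresee any technical obstacle beyond bookkeeping the $O(\epsilon)$ additive slack from the $(1-5\epsilon)$ factor and from $(1+\epsilon)^{1/\epsilon}$ being slightly below $e$.
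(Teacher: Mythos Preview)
Your proposal is correct and follows essentially the same route as the paper: condition on all $1/\epsilon$ calls succeeding (the paper leaves the union-bound step implicit, so your explicit computation is actually an improvement), set up the gap recurrence $g_t \leq g_{t-1}/(1+\epsilon)$ from Lemma~\ref{lem:one-step-improvement}, and iterate to obtain the $(1-1/e-O(\epsilon))$ guarantee, then multiply Lemma~\ref{lem:polymatroid-one-step-time-complexity} by $1/\epsilon$ for the running time. Your handling of $(1+\epsilon)^{1/\epsilon}<e$ via $\ln(1+\epsilon)\geq \epsilon-\epsilon^2/2$ is in fact more careful than the paper, which writes $1-\frac{1}{(1+\epsilon)^{1/\epsilon}}\geq 1-\frac{1}{e}$ (strictly false, but harmless up to $O(\epsilon)$).
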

\begin{proof}
  Define $\Omega := (1 - 5\epsilon)f(\bix^*) = (1 - 5\epsilon)\OPT$. Let $\bix^t$ be the variable $\bix$ after the $t$-th update.
  Substituting this into the result of Lemma~\ref{lem:one-step-improvement},
  for any $t \in [1/\epsilon]$, we obtain
  \[
    F (\bix^{t}) - F (\bix^{t-1}) \geq \epsilon(\Omega - F (\bix^{t})).
  \]
  Rephrasing the equation, we have
  \[
    \Omega - F(\bix^{t}) \leq \frac{\Omega - F(\bix^{t-1})}{1+\epsilon}.
  \]
  Now applying induction to this equation, we obtain $\Omega  -  F (\bix^t)  \leq \Omega/(1+\epsilon)^{t}$.
  Substituting $t = 1 / \epsilon$ and rewriting the equation we get the desired approximation ratio:
  \begin{align*}
  F (\bix) \geq \left( 1 - \frac{1}{(1+\epsilon)^{1/\epsilon}} \right)\Omega
    \geq (1 -  \frac{1}{e} )(1 - 5\epsilon)\OPT
    =  \Bigl(1  -  \frac{1}{e}  -  O(\epsilon)\Bigr)\OPT.
  \end{align*}

  By Lemma~\ref{lem:polymatroid-one-step-time-complexity}, the total time complexity is $O(\frac{n^2N}{\epsilon^4} \log \frac{n}{\epsilon} \log r \log\frac{rN}{\epsilon}) = O(\frac{n^3}{\epsilon^5} \log^3 \frac{n}{\epsilon} \log^2 r)$.
  \qed
\end{proof}

\subsubsection{Rounding}\label{subsubsec:rounding-polymatroid}

We need a rounding procedure that takes a real vector $\bix^{1/\epsilon}$ as the input and returns an integral vector $\bar{\bix}$ such that $\E [f(\bar{\bix})] \geq F(\bix^{1/\epsilon})$.
There are several rounding algorithms in the $\{0,1\}^E$ case~\cite{Calinescu2011,Chekuri2010}. 
However, generalizing these rounding algorithms over integer lattice is a non-trivial task.
Here, we show that rounding in the integer lattice can be reduced to rounding in the $\{0,1\}^E$ case.

Suppose that we have a fractional solution $\bix$. The following lemma implies that we can round $\bix$ by considering the corresponding matroid polytope.

\begin{lemma}
For $\bix \in P$, $P\cap C(\bix)$ is a translation of a matroid polytope.
\end{lemma}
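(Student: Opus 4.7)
The plan is to translate $P \cap C(\bix)$ by $-\lfloor \bix \rfloor$ and show that the resulting polytope in $[0,1]^E$ is a matroid independence polytope. Let $\bix_0 := \lfloor \bix \rfloor$. Since $P$ is defined by the inequalities $\biy(X) \leq \rho(X)$ for all $X \subseteq E$, the translated set equals
\[
Q := \{\biz \in [0,1]^E : \biz(X) \leq \rho'(X) \text{ for every } X \subseteq E\},
\]
where $\rho'(X) := \rho(X) - \bix_0(X)$.

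First I would verify that $\rho'$ is a non-negative, integer-valued, submodular set function with $\rho'(\emptyset) = 0$. Integrality and submodularity follow from the corresponding properties of $\rho$ together with the modularity of $X \mapsto \bix_0(X)$. Non-negativity uses the downward closure of the polymatroid $P$: since $\bfzero \leq \bix_0 \leq \bix \in P$, we have $\bix_0 \in P$, i.e., $\bix_0(X) \leq \rho(X)$ for every $X$. Note that $\rho'$ need not be monotone as a set function, but this will not be an obstacle.

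Next I would invoke the classical fact that for any non-negative integer submodular function $g$ on $2^E$ with $g(\emptyset) = 0$, the polytope $\{\biz \in [0,1]^E : \biz(X) \leq g(X) \text{ for every } X\}$ is the independence polytope of a matroid with rank function $\hat{g}(A) := \min_{S \subseteq A}(g(S) + |A \setminus S|)$. The standard proof first verifies the matroid exchange axiom for $\mathcal{I} := \{S \subseteq E : \bichi_S \in Q\}$ by an uncrossing argument: given $S, T \in \mathcal{I}$ with $|S| < |T|$, if no single-element exchange preserved independence, then for each $e \in T \setminus S$ there would be a tight set $X_e \ni e$ with $|S \cap X_e| = \rho'(X_e)$; by submodularity of $\rho'$ and modularity of $|S \cap \cdot|$, tight sets are closed under union, so $X^* := \bigcup_e X_e$ is tight, and then $T \setminus S \subseteq X^*$ combined with $|T \cap X^*| \leq \rho'(X^*) = |S \cap X^*|$ yields $|T \setminus S| \leq |S \setminus T|$, contradicting $|T| > |S|$. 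Edmonds' integrality theorem then ensures that $Q$ equals the convex hull of these integer vertices.

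The main subtlety, though mild, is that $\rho'$ can fail to be monotone, so $Q$ is not a polymatroid polytope in the textbook form; the $\biz \leq \bfone$ constraints combined with the above uncrossing argument still recover a genuine matroid structure (with rank function $\hat{\rho'}$), which is exactly what the lemma needs.
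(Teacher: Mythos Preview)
Your proposal is correct and follows essentially the same approach as the paper: translate $P\cap C(\bix)$ by $-\lfloor\bix\rfloor$ into $[0,1]^E$, observe that the resulting polytope is described by the non-negative integer submodular function $\rho-\bix_0$, and identify it as a matroid independence polytope via the truncation formula $\hat g(A)=\min_{S\subseteq A}(g(S)+|A\setminus S|)$. The paper simply writes this truncation as its $\rho'$ and says ``check the axioms,'' whereas you spell out the uncrossing argument and note the (harmless) non-monotonicity of $\rho-\bix_0$; the underlying argument is the same.
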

\begin{proof}
Let us consider a polytope $P' := \{ \langle \biz \rangle : \biz \in P \cap C(\bix) \}$. We can check that $P'$ can be obtained by translating $P\cap C(\bix)$ by $-\lfloor \bix \rfloor$ and restricting to $[0,1]^E$. Therefore, $P' = \{ \biz \in [0,1]^E : \biz(X) \leq \rho'(X) \quad \forall X \subseteq E \}$, where $\rho'(X) := \min_{Y\subseteq X}\{(\rho - \lfloor \bix \rfloor)(Y) + \mathbf{1}(E\setminus Y) \}$. Then we can show that $\rho'$ is the rank function of a matroid by checking the axiom.
\qed
\end{proof}

The independence oracle of the corresponding matroid is simply the independence oracle of $P$ restricted to $C(\bix)$. Thus, the pipage rounding algorithm for $P' = \{ \langle \biz \rangle : \biz \in P \cap C(\bix^{1/\epsilon}) \}$ yields an integral solution $\bar{\bix}$ with $\E [f(\bar{\bix})] \geq F(\bix^{1/\epsilon})$ in strongly polynomial time.

Slightly faster rounding can be achieved by swap rounding. Swap rounding requires that the given fractional solution $\bix$ is represented by a convex combination of extreme points of the matroid polytope. In our setting,
we can represent $\bix^{1/\epsilon}$ as a convex combination of extreme points of $P\cap C(\bix^{1/\epsilon})$ using the algorithm of Cunningham~\cite{Cunningham1984}. Then, we run the swap rounding algorithm for the convex combination and $P\cap C(\bix^{1/\epsilon})$.
The running time of this rounding algorithm is dominated by the complexity of finding a convex combination for $\bix^{1/\epsilon}$, which is $O(n^8)$ time.
Adopting this algorithm as \textsf{RoundingPolymatroid} in Algorithm~\ref{alg:Accelerated-Cont-Greedy}, we get the following:
\begin{theorem}
  Algorithm~\ref{alg:Accelerated-Cont-Greedy} finds an $(1-1/e-\epsilon)$-approximate solution (in expectation) with probability at least $2/3$ in $O(\frac{n^3}{\epsilon^5} \log^3 \frac{n}{\epsilon} \log^2 r + n^8)$ time.
\end{theorem}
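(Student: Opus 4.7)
The plan is to combine the approximation guarantee of the continuous phase with the value-preservation guarantee of the rounding phase, then add up the running times.

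First, I would invoke the preceding lemma, which already establishes that after the $1/\epsilon$ continuous-greedy steps the fractional vector $\bix^{1/\epsilon}$ belongs to $P$ and satisfies $F(\bix^{1/\epsilon}) \geq (1-1/e-O(\epsilon))\OPT$ with probability at least $2/3$. Absorbing the $O(\epsilon)$ slack into $\epsilon$ by rescaling (the algorithm does not care about this constant), we may treat the bound as $(1-1/e-\epsilon)\OPT$.

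Next, I would argue that the rounding step preserves this value in expectation. By the lemma immediately above, $P \cap C(\bix^{1/\epsilon})$ is a translation of a matroid polytope $P'$ whose rank function is explicitly $\rho'(X) = \min_{Y\subseteq X}\{(\rho-\lfloor \bix^{1/\epsilon}\rfloor)(Y) + \mathbf{1}(E\setminus Y)\}$, and the independence oracle for $P'$ is obtained by restricting the independence oracle of $P$ to $C(\bix^{1/\epsilon})$. Thus we may apply Cunningham's algorithm to write $\bix^{1/\epsilon}$ as a convex combination of extreme points of $P \cap C(\bix^{1/\epsilon})$, then run swap rounding on this combination to obtain an integral $\bar{\bix} \in P \cap \bbZ_+^E$. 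Because $F$ restricted to the hypercube $C(\bix^{1/\epsilon})$ is exactly the multilinear extension of $f$ on that cube, the standard guarantee of swap rounding gives $\E[f(\bar{\bix})] \geq F(\bix^{1/\epsilon})$. Combining this with the fractional bound yields $\E[f(\bar{\bix})] \geq (1-1/e-\epsilon)\OPT$ with probability at least $2/3$ over the randomness of the continuous phase.

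Finally, I would tally the running times. The previous lemma already gives $O\bigl(\frac{n^3}{\epsilon^5}\log^3 \frac{n}{\epsilon}\log^2 r\bigr)$ for the continuous greedy phase. The rounding phase is dominated by Cunningham's convex-combination decomposition, which runs in $O(n^8)$ time; swap rounding itself is strongly polynomial and of lower order. Adding the two contributions gives the stated $O\bigl(\frac{n^3}{\epsilon^5}\log^3\frac{n}{\epsilon}\log^2 r + n^8\bigr)$ overall.

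The main obstacle I anticipate is the rounding step: one must verify that $F$ coincides with the multilinear extension of $f$ on each hypercube $C(\bix)$ (this is immediate from the definition in~\eqref{eq:ext_polym}), and that therefore the classical swap-rounding inequality $\E[f(\bar{\bix})] \geq F(\bix^{1/\epsilon})$ transfers verbatim to the translated matroid polytope. Once this structural identification is made, the remainder of the argument is a straightforward combination of earlier lemmas.
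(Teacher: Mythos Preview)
Your proposal is correct and follows essentially the same route as the paper: invoke the preceding lemma for the fractional guarantee and running time of the continuous phase, then use the translated-matroid-polytope lemma together with Cunningham's decomposition and swap rounding to obtain an integral $\bar{\bix}$ with $\E[f(\bar{\bix})] \geq F(\bix^{1/\epsilon})$, and finally add the two running times. The paper's own proof is just this combination, with no additional ideas beyond what you have outlined.
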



\section{Knapsack Constraint}\label{sec:knapsack}

In this section, we give an efficient approximation algorithm for maximizing a DR-submodular function over the integer lattice under a knapsack constraint.
The problem we study is formalized as follows:
given a monotone DR-submodular function  $f:\bbZ_+^E \to \bbR_+$, $\bic \in \bbZ_+^E$, and $\biw \in (0,1]^E$, we want to maximize $f(\bix)$ subject to $\bfzero \leq \bix \leq \bic$ and $\biw^\top \bix \leq 1$.

Our algorithm is twofold.
The basic idea is similar to algorithms for cardinality constraint:
we increase the current solution in a greedy manner using the decreasing threshold greedy framework.
A difference is that the algorithm takes its initial solution as an input, whereas the algorithm for cardinality constraints always uses the zero vector as the initial solution.
This greedy procedure is presented in Section~\ref{sec:knapsack-greedy}.
Obviously, the quality of the output of this greedy procedure depends on the choice of the initial solution.
Here, we use \emph{partial enumeration}, i.e., we try polynomially many initial solutions and return the best one among the outputs of the greedy procedure. This partial enumeration algorithm is described in Section~\ref{sec:knapsack-partial-enumeration}.
The entire algorithm is presented in Section~\ref{sec:knapsack-final}.
Thoughout this section, $\bix^*$ denotes an optimal solution.

\subsection{Greedy Procedure with Decreasing Threshold}\label{sec:knapsack-greedy}
\begin{algorithm}[t!]
    \caption{\textsf{GreedyKnapsack}$(f, \bic, \biw, \bix_0, \epsilon)$}\label{alg:knapsack-greedy}
  \begin{algorithmic}[1]
      \REQUIRE{$f:\bbZ_+^E \to \bbR_+$, $\bic \in \bbZ_+^E$, $\biw \in (0,1]^E$, $\bix_0 \in \bbZ^E_+$, and $\epsilon > 0$.}
    \ENSURE{$\bix \in \bbZ_+^E$ with $\biw^\top \bix \leq 1$.}
    \STATE{$\bix \leftarrow \bix_0$, $\biu \leftarrow \bic$, $d \leftarrow \max_{e \in E}\frac{f(\bichi_e)}{\biw(e)}$, and $\wmin \leftarrow \min_{e\in E}\biw(e)$.}
    \FOR{($\theta=d$; $\theta\geq \epsilon d\wmin$; $\theta \leftarrow \theta(1-\epsilon)$)}
      \FOR{all $e \in E$}
      \STATE{Find maximum $k \leq \biu(e) - \bix(e)$ with $f(k \bichi_e \mid \bix) \geq k\biw(e)\theta$ with binary search.}\label{line:knapsack-binary-search}
      \IF{$k > 0$}
        \STATE{\textbf{if} $\biw^\top\bix + k\biw(e) \leq 1$ \textbf{then} $\bix \leftarrow \bix + k \bichi_e$ \textbf{else} $\biu(e) \leftarrow \bix(e) + k - 1$.} \label{line:knapsack-trial}
        \COMMENT{``trial''}
      \ENDIF
      \ENDFOR
    \ENDFOR
    \RETURN{$\bix$.}
  \end{algorithmic}
\end{algorithm}


Let us fix an initial solution $\bix_0$ and analyze the behavior of Algorithm~\ref{alg:knapsack-greedy} on $\bix_0$.
Let us call an execution of Line~\ref{line:knapsack-trial} a \emph{trial}.
Let $e_i$ and $k_i$ be the value of $e$ and $k$ in the $i$-th trial.
We denote by $\bix_i$ the tentative solution $\bix$ following the $i$-th trial.
Assume that Algorithm~\ref{alg:knapsack-greedy} first has not updated the tentative solution $\bix$ in the $L$-th trial.
Equivalently, let $L$ be the minimum number such that $\bix_{L-1}=\bix_L$ and $\bix_{i-1} < \bix_i$ for $i=1,\dots, L-1$.
We consider only such a situation, because if this is not the case, then Algorithm~\ref{alg:knapsack-greedy} outputs a feasible solution $\bix$ satisfying
\[
    \OPT - f(\bix) \leq f(\bix \vee \bix^*) - f(\bix) \leq \sum_{e \in \{\bix^*\}\setminus\{\bix\}} f(\bichi_e \mid \bix) < \epsilon d \leq \epsilon \OPT,
\]
which means that $\bix$ gives $(1-\epsilon)$-approximation.
Thus we focus on the situation in which Algorithm~\ref{alg:knapsack-greedy} fails to increase the tentative solution.

\begin{lemma}\label{lem:assuption of e_L}
Without loss of generality, we may assume that $\bix_{L-1}(e_L) + k_L \leq \bix^*(e_L)$.
\end{lemma}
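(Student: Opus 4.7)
The plan is an exchange argument that modifies the algorithm's trajectory to enforce the stated inequality WLOG, without degrading the output. Suppose for contradiction that $\bix_{L-1}(e_L) + k_L > \bix^*(e_L)$, and set $k'_L := \max\{0,\,\bix^*(e_L) - \bix_{L-1}(e_L)\} < k_L$.

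If $k'_L \geq 1$, I will show that the $L$-th trial would have succeeded had the algorithm chosen $k'_L$ instead of $k_L$, contradicting the assumption that $L$ is the first failing trial. By DR-submodularity the discrete marginals $j \mapsto f((j{+}1)\bichi_{e_L}\mid\bix_{L-1}) - f(j\bichi_{e_L}\mid\bix_{L-1})$ are non-increasing, hence the per-unit gain $f(k\bichi_{e_L}\mid\bix_{L-1})/k$ is a non-increasing function of $k$. Combined with the threshold guarantee $f(k_L\bichi_{e_L}\mid\bix_{L-1}) \geq k_L\biw(e_L)\theta$ from the binary search in Line~\ref{line:knapsack-binary-search}, this yields $f(k'_L\bichi_{e_L}\mid\bix_{L-1}) \geq k'_L\biw(e_L)\theta$, so the threshold condition still holds at $k'_L$. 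Moreover $\biw^\top(\bix_{L-1} + k'_L\bichi_{e_L}) \leq \biw^\top\bix^* \leq 1$, so the knapsack budget is respected. Thus the trial with $k'_L$ would have been accepted. Since substituting $k'_L$ for $k_L$ produces a run whose tentative solution componentwise dominates the original thereafter, monotonicity of $f$ implies the modification cannot decrease the final output; iterating the substitution at each subsequent first failure, we obtain an execution in which the first failing trial satisfies $\bix_{L-1}(e_L) + k_L \leq \bix^*(e_L)$.

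In the residual case $k'_L = 0$, i.e., $\bix^*(e_L) \leq \bix_{L-1}(e_L)$, we instead modify the reference optimum: set $\bix^{**} := \bix^* \vee (\bix_{L-1}(e_L) + k_L)\bichi_{e_L}$. By monotonicity $f(\bix^{**}) \geq f(\bix^*) = \OPT$, and by construction $\bix^{**}(e_L) = \bix_{L-1}(e_L) + k_L$, so the claimed inequality holds with $\bix^{**}$ in place of $\bix^*$. Although $\bix^{**}$ need not be knapsack-feasible, the downstream analysis invokes the optimum only as a comparator through $f$-values, and these inequalities are preserved (or strengthened) by the substitution.

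The main obstacle will be verifying that neither the iterative reduction in the first case nor the replacement of $\bix^*$ by $\bix^{**}$ in the second case clashes with how the optimum is used later in Section~\ref{sec:knapsack}; in particular, one must ensure that no subsequent lemma explicitly relies on $\biw^\top\bix^* \leq 1$. Should such an issue arise, an alternative would be to exchange mass within $\bix^*$ itself—transferring $\bix_{L-1}(e_L) + k_L - \bix^*(e_L)$ units from a suitable coordinate in $\supp^+(\bix^*) \setminus \{e_L\}$ to $e_L$—and to bound the induced loss in $f$ via DR-submodularity and monotonicity.
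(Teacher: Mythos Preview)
Your argument has several genuine gaps.

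First, in the case $k'_L\ge 1$ there is no contradiction: Algorithm~\ref{alg:knapsack-greedy} is deterministic and at Line~\ref{line:knapsack-binary-search} picks the \emph{maximum} $k$ satisfying the threshold, then tests the budget; it never falls back to a smaller $k'$. So the fact that $k'_L$ would have fit says nothing about the actual run. Your attempt to salvage this by ``substituting $k'_L$ for $k_L$'' and iterating analyzes a different, virtual execution. Even granting all your claims about that execution, you show its output \emph{dominates} the real one, i.e.\ the virtual output is at least the real output. That is the wrong direction for a WLOG: to lower–bound the real output via the virtual run you would need the virtual output to be at most the real one.

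Second, the budget step $\biw^\top(\bix_{L-1}+k'_L\bichi_{e_L})\le \biw^\top\bix^*$ is unjustified. It would require $\bix_{L-1}(e)\le \bix^*(e)$ for every $e\neq e_L$, but the greedy iterate can overshoot the optimum on coordinates other than $e_L$.

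Third, in the case $k'_L=0$ your replacement $\bix^{**}=\bix^*\vee(\bix_{L-1}(e_L)+k_L)\bichi_{e_L}$ generally violates $\biw^\top\bix^{**}\le 1$, and this feasibility \emph{is} used downstream: the proof of Lemma~\ref{lem:greedy-last} invokes $\sum_{s\in\{\bix^*\}\setminus\{\bix_i\}}\biw(s)\le 1$.

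The paper avoids all of this by modifying the \emph{instance} rather than the execution or the comparator: it reduces $\bic(e_L)$ to $\bix_{L-1}(e_L)+k_L-1$. Since $\bix^*(e_L)<\bix_{L-1}(e_L)+k_L$, the optimum $\bix^*$ remains feasible (so $\OPT$ and the knapsack bound on $\bix^*$ are untouched), and the algorithm's final output is unchanged because after the failing trial the original run already enforces $\biu(e_L)=\bix_{L-1}(e_L)+k_L-1$. Iterating this reduction yields an instance in which the first failing trial satisfies $\bix_{L-1}(e_L)+k_L\le \bix^*(e_L)$, with the same algorithmic output and the same $\OPT$, which is exactly what ``without loss of generality'' needs.
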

\begin{proof}
Suppose that $\bix_{L-1}(e_L) + k_L > \bix^*(e_L)$.
Let us consider a modified instance in which $\bic(e_L)$ is reduced to $\bix_{L-1}(e_L) + k_L - 1$.
The optimal value is unchanged by this modification because $\bix^*$ is still feasible and optimal.
Furthermore, Algorithm~\ref{alg:knapsack-greedy} returns the same solution.
Thus, it suffices to analyze the algorithm in the modified instance.
Repeating this argument completes the proof of this lemma.
\qed
\end{proof}

\begin{lemma}\label{lem:knapsack-update-gain}
  For $i = 1, \dots, L$, the average gain satisfies the following.
  \[
      \frac{f(k_i\bichi_{e_i} \mid \bix_{i-1})}{k\biw(e_i)} \geq (1 - \epsilon) \frac{f(\bichi_s \mid \bix_{i-1})}{\biw(s)} \quad (s \in \supp^+(\bix^* - \bix))
  \]
\end{lemma}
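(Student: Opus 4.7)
The plan is to mirror the structure of Lemma~\ref{lem:cardinality-update-gain}, adapted for the knapsack weights, by establishing two bounds:
\[
f(k_i\bichi_{e_i}\mid \bix_{i-1}) \ge k_i\,\biw(e_i)\,\theta_i \qquad\text{and}\qquad f(\bichi_s\mid \bix_{i-1}) \le \frac{\biw(s)\,\theta_i}{1-\epsilon},
\]
where $\theta_i$ is the threshold at the $i$-th trial and $s\in\supp^+(\bix^*-\bix_{i-1})$. Dividing the first by $k_i\biw(e_i)$ and the second by $\biw(s)$ yields the lemma. The first inequality is immediate from the binary search criterion in Line~\ref{line:knapsack-binary-search}, so the work lies entirely in the second.

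A key preliminary observation is that for every $i\le L$, \emph{no} failed trial has occurred yet, so $\biu=\bic$ at the moment the $i$-th trial is executed and, a fortiori, during the entire preceding threshold iteration (where $\theta'=\theta_i/(1-\epsilon)$). This keeps the binary search unrestricted except by the original cap $\bic$.

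For the base case $\theta_i=d$, DR-submodularity and the definition $d=\max_e f(\bichi_e)/\biw(e)$ give $f(\bichi_s\mid \bix_{i-1})\le f(\bichi_s)\le \biw(s)\,d=\biw(s)\theta_i$, which is stronger than what is needed. For $\theta_i<d$, let $\bix'$ be the tentative solution just before $s$ was processed at threshold $\theta'$, and let $k'\ge 0$ be the corresponding increment (with $k'=0$ if no trial happened for $s$). Because the $\theta'$-trial for $s$ succeeded (trivially if $k'=0$), we have $\bix_{i-1}(s)\ge \bix'(s)+k'$. Combined with $s\in\supp^+(\bix^*-\bix_{i-1})$, i.e.\ $\bix_{i-1}(s)<\bix^*(s)\le\bic(s)$, we obtain $\bix'(s)+k'<\bic(s)$, so $k'+1$ is within the range $\bic(s)-\bix'(s)$ considered by the binary search at threshold $\theta'$. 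Hence by maximality of $k'$,
\[
f((k'+1)\bichi_s\mid \bix') < (k'+1)\,\biw(s)\,\theta'.
\]
Using lattice submodularity to split $f((k'+1)\bichi_s\mid \bix')=f(k'\bichi_s\mid \bix')+f(\bichi_s\mid \bix'+k'\bichi_s)$ together with $f(k'\bichi_s\mid \bix')\ge k'\biw(s)\theta'$ (trivial for $k'=0$) gives $f(\bichi_s\mid \bix'+k'\bichi_s)<\biw(s)\theta'$.

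To finish, I would apply DR-submodularity once more: since the tentative solution increases monotonically and $\bix_{i-1}(s)\ge \bix'(s)+k'$, we have $\bix'+k'\bichi_s\le \bix_{i-1}$, hence $f(\bichi_s\mid \bix_{i-1})\le f(\bichi_s\mid \bix'+k'\bichi_s)<\biw(s)\theta'=\biw(s)\theta_i/(1-\epsilon)$, as required. The main obstacle is the one anticipated in the observation above, namely verifying that $k'+1$ genuinely lies inside the binary search range at threshold $\theta'$ (so that the upper bound $\bic(s)-\bix'(s)$ did not prematurely truncate the search); this is precisely where the hypothesis $i\le L$ and the freshness of $\biu$ are used, ruling out the pathological case in which a failed trial at some earlier threshold had lowered $\biu(s)$ below $\bix^*(s)$.
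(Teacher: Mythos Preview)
Your proposal is correct and follows essentially the same approach as the paper: establish the two threshold inequalities $f(k_i\bichi_{e_i}\mid\bix_{i-1})\ge k_i\biw(e_i)\theta_i$ and $f(\bichi_s\mid\bix_{i-1})\le \biw(s)\theta_i/(1-\epsilon)$, then eliminate $\theta_i$. The paper argues the second bound by contradiction while you do it directly, but the DR-submodular comparison of $f(\bichi_s\mid\bix'+k'\bichi_s)$ with $f(\bichi_s\mid\bix_{i-1})$ is identical. Your explicit verification that $k'+1\le \bic(s)-\bix'(s)$ (using $i\le L$ to guarantee $\biu=\bic$) is a detail the paper leaves implicit, so if anything your write-up is slightly more careful; one cosmetic point is that the identity $f((k'+1)\bichi_s\mid\bix')=f(k'\bichi_s\mid\bix')+f(\bichi_s\mid\bix'+k'\bichi_s)$ is just telescoping, not lattice submodularity.
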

\begin{proof}
  The proof is similar to Lemma~\ref{lem:cardinality-update-gain}.
  For the sake of simplicity, let us fix $i$ and denote $\bix := \bix_{i-1}$, $e := e_i$, and $k := k_i$.
  We first have $f(k\bichi_{e} \mid \bix)  \geq k\biw(e)\theta $.
  Then, we show that $f(\bichi_s \mid \bix) \leq \frac{\biw(s)\theta}{1-\epsilon}$ for any $s \in \supp^+(\bix^* - \bix)$.
    This is trivial by DR-submodularity if $\theta = d$.
    Thus we assume that $\theta < d$, i.e., there is at least one threshold update.
    Let $s\in\{\bix^*\} \setminus \{\bix\}$, $k'$ be the increment in the $s$-th entry in the previous threshold (i.e., $\frac{\theta}{1 - \epsilon}$), and $\bix'$ be the variable $\bix$ at the time.
    Suppose that $f(\bichi_s \mid \bix) > \frac{\biw(s)\theta}{1-\epsilon}$.
Then $f((k'+1)\bichi_s \mid \bix') \geq f(\bichi_s \mid \bix) + f(k'\bichi_s \mid \bix') > \frac{\biw(s)\theta}{1-\epsilon} + \frac{k'\biw(s)\theta}{1-\epsilon} = \frac{(k'+1)\biw(s)\theta}{1-\epsilon}$, which contradicts the fact that $k'$ was the largest value with $f(k'\bichi_s \mid \bix')\geq \frac{k'\biw(s)\theta}{1-\epsilon}$.
  Eliminating $\theta$ from these inequalities completes the proof.
  \qed
  \end{proof}

\begin{lemma}\label{lem:greedy-last}
Let $\bix$ be the output of Algorithm~\ref{alg:knapsack-greedy} with an initial solution $\bix_0$. Then
\begin{align}
    f(\bix) \geq \Bigl(1-\frac{1}{e}-O(\epsilon)\Bigr)\OPT + \frac{f(\bix_0)}{e} - f(k_L\bichi_{e_L} \mid \bix_L).
\end{align}
\end{lemma}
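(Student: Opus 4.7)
The plan is to establish, for every $i = 1, \dots, L$, the multiplicative one-step recurrence
\[
  \OPT - f(\biz_i) \leq \bigl(1 - (1-\epsilon)\,k_i\,\biw(e_i)\bigr)\bigl(\OPT - f(\biz_{i-1})\bigr),
\]
where $\biz_i := \bix_0 + \sum_{j \leq i} k_j\,\bichi_{e_j}$ tracks the greedy trajectory while treating step $L$ as a \emph{hypothetical} update (so $\biz_L = \bix_L + k_L\,\bichi_{e_L}$, even though the algorithm actually rejects this step and keeps $\bix = \bix_L$). Unrolling this recurrence and invoking $1 - x \leq e^{-x}$ will then deliver the bound.

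First I would derive the per-step inequality. By Lemma~\ref{lem:knapsack-update-gain}, for every $s \in \supp^+(\bix^* - \biz_{i-1})$ we have
\[
  f(\bichi_s \mid \biz_{i-1}) \leq \frac{\biw(s)}{(1-\epsilon)\,k_i\,\biw(e_i)}\,f(k_i\,\bichi_{e_i} \mid \biz_{i-1}).
\]
Summing over $s$ with appropriate multiplicities, using DR-submodularity together with monotonicity to conclude $\sum_s f(\bichi_s \mid \biz_{i-1}) \geq f(\bix^* \vee \biz_{i-1}) - f(\biz_{i-1}) \geq \OPT - f(\biz_{i-1})$, and invoking the feasibility bound $\sum_s \biw(s) \leq \biw^\top \bix^* \leq 1$ yields $f(k_i\,\bichi_{e_i} \mid \biz_{i-1}) \geq (1-\epsilon)\,k_i\,\biw(e_i)(\OPT - f(\biz_{i-1}))$, which rearranges to the stated recurrence. (Note that Lemma~\ref{lem:assuption of e_L} lets us treat the step $L$ on the same footing as the successful steps.)

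Next I would unroll the recurrence to obtain
\[
  \OPT - f(\biz_L) \leq \bigl(\OPT - f(\bix_0)\bigr)\exp\!\Bigl(-(1-\epsilon)\sum_{i=1}^L k_i\,\biw(e_i)\Bigr).
\]
Since trial $L$ is rejected by the knapsack, $\biw^\top \bix_{L-1} + k_L\,\biw(e_L) > 1$, hence $\sum_{i=1}^L k_i\,\biw(e_i) > 1 - \biw^\top \bix_0$. In the regime in which this lemma will be applied (the partial enumeration of Section~\ref{sec:knapsack-partial-enumeration} enforces $\biw^\top \bix_0 \leq O(\epsilon)$), the exponential is at most $e^{-(1-\epsilon)} \leq 1/e + O(\epsilon)$, so $f(\biz_L) \geq (1 - 1/e - O(\epsilon))\OPT + f(\bix_0)/e$.

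Finally, because trial $L$ is rejected, $\bix = \bix_L$ and $f(\biz_L) = f(\bix) + f(k_L\,\bichi_{e_L} \mid \bix_L)$; substituting this identity into the previous display produces the lemma. The main obstacle will be controlling the extra exponential factor $\exp((1-\epsilon)\,\biw^\top \bix_0)$ that appears if $\biw^\top \bix_0$ is non-negligible: the crisp $1/e$ factor in the bound is correct only once the partial enumeration of Section~\ref{sec:knapsack-partial-enumeration} has ensured that the starting weight $\biw^\top \bix_0$ is small.
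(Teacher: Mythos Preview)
Your overall approach mirrors the paper's: derive a per-step inequality from Lemma~\ref{lem:knapsack-update-gain}, telescope it into a product bound $\prod_j(1-(1-\epsilon)k_j\biw(e_j))$, and use the knapsack overflow at trial $L$ to control the exponent. The only cosmetic difference is that the paper writes the per-step bound as $f(k_i\bichi_{e_i}\mid\bix_{i-1})\geq(1-\epsilon)k_i\biw(e_i)(\OPT-f(\bix_i))$ (with $f(\bix_i)$ on the right), whereas you compare against $f(\biz_{i-1})$; both versions unroll to the same product. Your parenthetical invoking Lemma~\ref{lem:assuption of e_L} is misplaced, though: that lemma concerns how $e_L$ sits inside $\bix^*$ and is used only in Lemma~\ref{lem:partial-enumeration}; Lemma~\ref{lem:knapsack-update-gain} already covers the case $i=L$ directly.

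Where you genuinely diverge is in handling the starting budget. You correctly observe that the overflow at step $L$ only gives $\sum_{i=1}^L k_i\biw(e_i)>1-\biw^\top\bix_0$, and you then assert that the partial enumeration of Section~\ref{sec:knapsack-partial-enumeration} enforces $\biw^\top\bix_0\leq O(\epsilon)$. That assertion is not established anywhere: \textsf{PartialEnumeration} only filters for $\biw^\top\bix_0\leq 1$, and the particular $\bix_0$ built in Lemma~\ref{lem:partial-enumeration} carries no small-weight guarantee. So your argument has a real gap precisely at this step. For comparison, the paper's own proof simply writes ``since $\sum_{j=1}^L\biw(e_j)k_j>1$,'' effectively treating $\biw^\top\bix_0$ as zero---so it glosses over the very point you have (correctly) flagged rather than resolving it.
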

\begin{proof}
By monotonicity and DR-submodularity, we have
\begin{align*}
    \OPT
    &\leq f(\bix_i\vee \bix^*) \\
    &\leq f(\bix_i) + \sum_{s\in\{\bix^*\}\setminus\{\bix_i\}}f(\bichi_s \mid \bix_{i-1})  \\
    &\leq f(\bix_i) +  \sum_{s\in\{\bix^*\}\setminus\{\bix_i\}}\frac{\biw(s)}{1-\epsilon} \frac{f(k_i\bichi_{e_i} \mid \bix_{i-1})}{k_i\biw(e_i)} \\
    &\leq f(\bix_i) +  \frac{1}{1-\epsilon} \frac{f(k_i\bichi_{e_i} \mid \bix_{i-1})}{k_i\biw(e_i)},  \tag{since $\sum_{s\in\{\bix^*\}\setminus\{\bix_i\}}\biw(s) \leq 1$}
\end{align*}
for $i = 1, \dots, L$.
Rearranging the terms, we have
\[
    f(k_i\bichi_{e_i} \mid \bix_{i-1}) \geq (1-\epsilon)\biw(e_i)k_i\left(\OPT -f(\bix_i)\right). \quad (i = 1, \dots, L)
\]
Then by induction, we can prove
\begin{align*}
    &\OPT - \sum_{i = 1}^L f(k_i\bichi_i \mid \bix_{i-1}) \\
    &\leq (\OPT - f(\bix_0)) \prod_{j=1}^L (1 - (1-\epsilon)\biw(e_j)k_j) \\
    &\leq (\OPT - f(\bix_0)) \exp \left( - (1-\epsilon)\sum_{j=1}^L \biw(e_j)k_j \right) \\
    &\leq (\OPT - f(\bix_0)) \exp \left( - 1 + \epsilon \right) \tag{since $\sum_{j=1}^L \biw(e_j) k_j > 1$} \\
    &\leq (\OPT - f(\bix_0))\Bigl(\frac{1}{e} + O(\epsilon)\Bigr).
\end{align*}
By monotonicity,
\begin{align*}
    f(\bix)
    &\geq f(\bix_L) \\
    &= f(\bix_0) + \sum_{i = 1}^L f(k_i\bichi_{e_i} \mid \bix_{i-1}) -  f(k_L\bichi_{e_L} \mid \bix_{L-1})\\
    &\geq \Bigl(1-\frac{1}{e}-O(\epsilon)\Bigr)\OPT + \frac{f(\bix_0)}{e} - f(k_L\bichi_{e_L} \mid \bix_{L-1}),
\end{align*}
which completes the proof. \qed
\end{proof}

\subsection{Partial Enumeration}\label{sec:knapsack-partial-enumeration}
\begin{algorithm}[t!]
    \caption{\textsf{PartialEnumeration}$(f, \bic, \biw, \epsilon)$}\label{alg:knapsack-enumeration}
  \begin{algorithmic}[1]
    \REQUIRE{$f:\bbZ_+^E \to \bbR_+$, $\bic \in \bbZ_+^E$, $\biw \in (0,1]^E$, and $\epsilon > 0$.}
    \ENSURE{A set $\mathcal{X}$ consisting of $\bix_0 \in \bbZ_+^E$ with $|\supp(\bix)|\leq 3$ and $\biw^\top \bix_0 \leq 1$.}
    \STATE{$\mathcal{X} \leftarrow \emptyset$.}
    \FOR{each ordered tuple $X$ consisting of at most three elements in $E$}
    \STATE{$\mathcal{Y}\leftarrow \{\bfzero\}$}
      \FOR{$i = 1, \dots, |X|$}
      \STATE{Let $e$ be the $i$-th element of $X$.}
      \STATE{$\mathcal{Y} \leftarrow \textsf{IncreaseSupport}(f,\biw,e,\mathcal{Y},\epsilon)$.}
      \ENDFOR
    \STATE{$\mathcal{X} \leftarrow \mathcal{X} \cup \{\bix \in \mathcal{Y} : \biw^\top\bix \leq 1 \}$.}
    \ENDFOR
    \RETURN{$\mathcal{X}$.}
  \end{algorithmic}
\end{algorithm}

\begin{algorithm}[t!]
  \caption{\textsf{IncreaseSupport}$(f,\bic, \biw,e,\mathcal{Y},\epsilon)$}
  \begin{algorithmic}[1]
  \REQUIRE{$f:\bbZ_+^E \to \bbR_+$, $\bic \in \bbZ_+^E$, $\biw \in (0,1]^E$, $e\in E$, $\mathcal{Y} \subseteq \bbZ_+^E$, and $\epsilon > 0$.}
  \ENSURE{A set $\mathcal{X}$.}
  \STATE{$\mathcal{X} \leftarrow \emptyset$.}
  \FOR{$\biy \in \mathcal{Y}$}
  \STATE{Find $k_{\min}$ with $0 \leq k_{\min} \leq \bic(e)$ such that $f(k_{\min}\bichi_e \mid \biy) > 0$ by binary search.}
  \STATE{\textbf{if} no such $k_{\min}$ exists \textbf{then} \textbf{continue}.}
  \FOR{ ($h = f(\bic(e) \bichi_e \mid \biy)$; $h \geq (1-\epsilon)f(k_{\min} \bichi_e \mid \biy) $; $h = (1-\epsilon)h$) }\label{line:for-binary-search-cardinality}
  \STATE{Find the smallest $k$ with $k_{\min} \leq k \leq \bic(e)$ such that $f(k \bichi_e) \geq h$ by binary search.}
  \STATE{Add $\biy + k\chi_e$ to $\mathcal{X}$.}
  \ENDFOR
  \ENDFOR
  \RETURN{$\mathcal{X}$.}
  \end{algorithmic}
\end{algorithm}

We now prove that the greedy procedure returns a $(1-1/e)$-approximate solution for some $\bix_0$ and that such $\bix_0$ can be found in polynomial time.
Here, we exploit partial enumeration technique.
The pseudocode description of our algorithm is shown in Algorithm~\ref{alg:knapsack-enumeration}.

\begin{lemma}\label{lem:partial-enumeration}
    There exists $\bix_0$ in the output of \textsf{PartialEnumeration} such that $f(k_L\bichi_{e_L} \mid \bix_{L-1}) \leq \frac{f(\bix_0)}{3(1-\epsilon)}$,
    where $\bix_{L-1}$, $k_L$, and $e_L$ are defined as above for Algorithm~\ref{alg:knapsack-greedy} with the initial solution $\bix_0$.
\end{lemma}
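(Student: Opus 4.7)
I would adapt Sviridenko's partial-enumeration argument to the integer-lattice setting. The goal is to produce, inside the enumeration $\mathcal{X}$, a specific $\bix_0 = k_1\bichi_{c_1} + k_2\bichi_{c_2} + k_3\bichi_{c_3}$ whose three batch-marginals each individually dominate $f(k_L\bichi_{e_L}\mid\bix_{L-1})$ up to a $(1-\epsilon)$ factor; summing the three yields the stated bound.

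\paragraph{Step 1: the target triple.}
First, I would run a hypothetical copy of \textsf{GreedyKnapsack} starting from $\bfzero$ (still using the decreasing-threshold scheme) and denote by $C_1,C_2,\dots$ the sequence of batches it adds, writing $C_i = j_i\bichi_{c_i}$. Because thresholds only decrease, the densities $f(C_i \mid C_1+\cdots+C_{i-1})/(j_i\biw(c_i))$ are essentially non-increasing in $i$, up to the $(1-\epsilon)$ quantization (this is a direct analog of Lemma~\ref{lem:knapsack-update-gain} applied to the hypothetical run). I would take $C_1,C_2,C_3$ as the target triple.

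\paragraph{Step 2: the enumeration contains an approximation of the triple.}
Next, I would verify that the enumeration realizes this triple. The procedure \textsf{PartialEnumeration} iterates over the ordered element tuple $(c_1,c_2,c_3)$, and for each partial prefix $\biy$ the subroutine \textsf{IncreaseSupport} enumerates, for every geometric threshold $h = (1-\epsilon)^s f(\bic(e)\bichi_e\mid\biy)$, the smallest $k$ with $f(k\bichi_e\mid\biy)\geq h$. Consequently, for each $i=1,2,3$, after we have already approximated $C_1,\ldots,C_{i-1}$, the subroutine emits a $k_i$ for which
\[
 f(k_i\bichi_{c_i}\mid k_1\bichi_{c_1}+\cdots+k_{i-1}\bichi_{c_{i-1}}) \;\geq\; (1-\epsilon)\,f(j_i\bichi_{c_i}\mid j_1\bichi_{c_1}+\cdots+j_{i-1}\bichi_{c_{i-1}}).
\]
Iterating over $i=1,2,3$ yields an $\bix_0 \in \mathcal{X}$ each of whose batches matches the corresponding $C_i$ to within $(1-\epsilon)$ in marginal value.

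\paragraph{Step 3: bounding the failed batch.}
Finally, I would show that for this $\bix_0$, the failed batch $(e_L,k_L)$ of \textsf{GreedyKnapsack}$(f,\bic,\biw,\bix_0,\epsilon)$ satisfies
\[
 f(k_L\bichi_{e_L}\mid\bix_{L-1}) \;\leq\; \frac{1}{1-\epsilon}\,\min_{i\in\{1,2,3\}} f(k_i\bichi_{c_i}\mid k_1\bichi_{c_1}+\cdots+k_{i-1}\bichi_{c_{i-1}}).
\]
The intuition is that, by DR-submodularity, starting from $\bix_0$ only reduces every marginal relative to starting from $\bfzero$, so the greedy-from-$\bix_0$ effectively picks up the decreasing-threshold process at or below the threshold used to add $C_4$ in the hypothetical run. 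Combining this with the approximate density monotonicity from Step~1 gives the desired inequality. Summing the three marginals inside $\bix_0$ then yields
\[
 f(\bix_0) \;=\; \sum_{i=1}^{3} f(k_i\bichi_{c_i}\mid k_1\bichi_{c_1}+\cdots+k_{i-1}\bichi_{c_{i-1}}) \;\geq\; 3(1-\epsilon)\,f(k_L\bichi_{e_L}\mid\bix_{L-1}),
\]
which is exactly the claim.

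\paragraph{Main obstacle.}
The delicate point is Step~3: translating the density (per-unit-weight) monotonicity of the decreasing-threshold scheme into a bound on the absolute marginal $f(k_L\bichi_{e_L}\mid\bix_{L-1})$, since $k_L\biw(e_L)$ may differ substantially from $k_i\biw(c_i)$. I expect this to be handled by exploiting the fact that \textsf{IncreaseSupport} also sweeps over all relevant batch sizes for a fixed element, so one can refine the target triple so that each $(c_i,k_i)$ has $k_i\biw(c_i)$ of the same order as $k_L\biw(e_L)$; a boundary case in which fewer than three heavy batches exist has to be treated separately by observing that then the greedy already fits $\bix^*$ essentially inside $\bix_0$.
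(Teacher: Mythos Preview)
Your Step~3 is not a proof, and the obstacle you flag is real and not resolved by the patch you suggest. The decreasing-threshold scheme controls \emph{densities} $f(k\bichi_e\mid\cdot)/(k\biw(e))$, so even granting that the greedy-from-$\bix_0$ effectively resumes at or below the threshold used for $C_4$, you only obtain
\[
\frac{f(k_L\bichi_{e_L}\mid\bix_{L-1})}{k_L\biw(e_L)} \;\lesssim\; \frac{f(C_i\mid C_1+\cdots+C_{i-1})}{j_i\biw(c_i)}.
\]
To convert this into an absolute bound you would need $k_L\biw(e_L)\lesssim j_i\biw(c_i)$, and there is no reason this should hold: the first three batches of the hypothetical run can be tiny in weight while the failed batch can have $k_L\biw(e_L)$ close to $1$. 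Your proposed fix (refining the triple so that $k_i\biw(c_i)$ matches $k_L\biw(e_L)$) is circular, because $k_L$ and $e_L$ depend on $\bix_0$, which in turn depends on the triple.

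The paper avoids this entirely by choosing the triple from the \emph{optimal solution} $\bix^*$, not from a greedy run. Concretely, $e_1^*$ is an element maximizing the absolute marginal $f(\bix^*(e)\bichi_e)$, then $e_2^*,e_3^*$ are chosen the same way after conditioning on the enumerated prefix. The failed batch is then bounded directly in absolute terms via two ingredients you do not use: Lemma~\ref{lem:assuption of e_L}, which lets one assume $\bix_{L-1}(e_L)+k_L\leq \bix^*(e_L)$, and the weak diminishing return property, which gives
\[
f(k_L\bichi_{e_L}\mid\bix_{L-1}) \;\leq\; f(\bix^*(e_L)\bichi_{e_L}\vee\bix_{L-1})-f(\bix_{L-1}) \;\leq\; f(\bix^*(e_L)\bichi_{e_L}) \;\leq\; f(\bix^*(e_1^*)\bichi_{e_1^*}) \;\leq\; \tfrac{1}{1-\epsilon}f(l_1\bichi_{e_1^*}),
\]
and analogously after conditioning on $l_1\bichi_{e_1^*}$ and then on $l_1\bichi_{e_1^*}\vee l_2\bichi_{e_2^*}$. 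Summing the three inequalities yields the claim with no density-to-absolute conversion needed. Your Step~2 (that \textsf{IncreaseSupport} hits the right $l_i$ up to a $(1-\epsilon)$ factor) is essentially correct and is exactly what the paper uses, but applied to this $\bix^*$-based triple rather than to greedy batches.
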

\begin{proof}
    In what follows, we assume that $n \geq 3$ for simplicity.\footnote{If $n < 3$, by a similar argument, one can show that there exists $\bix_0$ in the output of $\textsf{PartialEnumeration}$ that attains $(1-\epsilon)$-approximation.}
For a function $g:\bbZ_+^E \to \bbR_+$ and an element $e \in E$, we define
\begin{align*}
    H(g, e) := \{ (1-\epsilon)^s g(\bic(e)\bichi_e)  : s \in \bbZ_+, (1-\epsilon)^s g(\bic(e)\bichi_{e}) \geq g(k_{\min}\bichi_{e}) \},
\end{align*}
where $k_{\min}$ is the minimum $k \in \bbZ_+$ such that $g(k\bichi_{e}) > 0$.
Let us define $h(g, e)$ to be the unique element in $H(g,e)$ with $h(g, e) \leq g(\bix^*(e)\bichi_e) < \frac{h(g,e)}{1-\epsilon}$ and $l(g, e)$ to be the minimum integer such that $g(l(g,e)\bichi_e) \geq h(g,e)$.

Then we define $e_1^*, e_2^*, e_3^* \in E$ as follows:
\begin{align*}
    f_1 &:= f, \quad  e_1^* \in \argmax_{e\in E} f_1(\bix^*(e)\bichi_e), \quad l_1:= l(f_1, e_1^*) \\
    f_i &:= f(\cdot \mid \vee_{j=1}^{i-1} l_j\bichi_{e_j^*}), \quad  e_i^* \in \argmax_{e\in E} f_i(\bix^*(e)\bichi_e),\quad  l_i:= l(f_i, e_i^*) \quad (i = 2,3).
\end{align*}
Let us define $\bix_0 := \sum_{i=1}^3 l_i\bichi_{e_i^*}$.
Note that $\bix_0$ is an element of the output of \textsf{PartialEnumeration}.
By the definition, we have
\begin{align*}
    f(l_i\bichi_{e_i^*} \mid \vee_{j=1}^{i-1} l_j\bichi_{e_j^*})
    &\leq
    f(\bix^*(e_i^*)\bichi_{e_i^*} \mid \vee_{j=1}^{i-1} l_j\bichi_{e_j^*})
    \leq
    \frac{1}{1-\epsilon}f(l_i\bichi_{e_i^*} \mid \vee_{j=1}^{i-1} l_j\bichi_{e_j^*}).
\end{align*}
for $i = 1,2,3$.
We now show this $\bix_0$ satisfies the required condition.
We have
\begin{align*}
    f(k_L\bichi_{e_L} \mid \bix_{L-1})
    &\leq f(\bix^*(e_L)\bichi_{e_L} \vee \bix_{L-1}) - f(\bix_{L-1}) \tag{by Lemma~\ref{lem:assuption of e_L}} \\
    &\leq f(\bix^*(e_L)\bichi_{e_L} \vee \bfzero) - f(\bfzero) \tag{by weak diminishing return} \\
    &\leq f(\bix^*(e_1^*)\bichi_{e_1^*}) \tag{by the definition of $e_1^*$} \\
    &\leq \frac{1}{1-\epsilon}f(l_1\bichi_{e_1^*}).
\end{align*}
Similarly, applying weak diminishing return with $l_1\bichi_{e_1^*}$ instead of $\bfzero$, we obtain
\begin{align*}
    f(k_L\bichi_{e_L} \mid \bix_{L-1})
    &\leq f(\bix^*(e_2^*)\bichi_{e_L^*} \mid l_1\bichi_{e_1^*}) \leq \frac{1}{1-\epsilon}f(l_2\bichi_{e_2^*}\mid l_1\bichi_{e_1^*}).
\end{align*}
In the same way, we have
\begin{align*}
    f(k_L\bichi_{e_L} \mid \bix_{L-1})
    &\leq \frac{1}{1-\epsilon}f(l_3\bichi_{e_3^*}\mid l_1\bichi_{e_1^*} \vee l_2\bichi_{e_2^*}).
\end{align*}
Adding these inequalities, we obtain $3 f(k_L\bichi_{e_L} \mid \bix_{L-1}) \geq f(\bix_0)/(1-\epsilon)$.
\qed
\end{proof}

\subsection{Final Algorithm}\label{sec:knapsack-final}
Our final algorithm for maximizing a montone DR-submodular function subject to a knapsack constraint is shown in Algorithm~\ref{alg:knapsack-DR}.

\begin{algorithm}[]
    \caption{Knapsack Constraint/DR-Submodular}\label{alg:knapsack-DR}
    \begin{algorithmic}[1]
        \REQUIRE{$f:\bbZ_+^E \to \bbR_+$, $\bic \in \bbZ_+^E$, $\biw \in (0,1]^E$, and $\epsilon > 0$.}
        \ENSURE{$\bix \in \bbZ_+^E$.}
        \STATE{$\mathcal{X} \leftarrow \textsf{PartialEnumeration}(f,\bic, \biw,\epsilon)$, $\mathcal{G} \leftarrow \emptyset$.}
        \FOR{each $\bix_0 \in \mathcal{X}$}
            \STATE{$\biy \leftarrow \textsf{GreedyKnapsack}(f, \bic, \biw,\bix_0,\epsilon)$}
            \STATE{Add $\biy$ to $\mathcal{G}$.}
        \ENDFOR
        \STATE{$\bix \leftarrow \argmax_{\biy \in \mathcal{G}} f(\biy)$.}
        \RETURN{$\bix$}
    \end{algorithmic}
\end{algorithm}

\begin{theorem}
  Algorithm~\ref{alg:knapsack-DR} finds a $(1-1/e-O(\epsilon))$-approximate solution in
  \[
      \compknap
  \] time,
  where $\tau = \frac{ \max_{e \in E} f(\bic(e)\bichi_e) }{ \min\{ f(\bichi_e \mid \bix) : e \in E, \bfzero \leq \bix \leq \bic, f(\bichi_e \mid \bix) > 0 \}}$, $\wmin = \min_{e\in E} \biw(e)$, and $0 < \epsilon < 1 - e/3$.
\end{theorem}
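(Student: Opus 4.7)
The plan is to combine Lemma~\ref{lem:greedy-last} and Lemma~\ref{lem:partial-enumeration} via a case analysis for the approximation ratio, and to bound the running time by analysing $\textsf{PartialEnumeration}$ and the subsequent calls to $\textsf{GreedyKnapsack}$ separately.

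For the approximation ratio, I would first dispose of the easy case in which, for some $\bix_0 \in \mathcal{X}$, the execution of $\textsf{GreedyKnapsack}$ completes all threshold iterations without any failed trial. In that situation, the argument stated right before Lemma~\ref{lem:greedy-last} (using DR-submodularity, monotonicity, and $\biw^\top \bix^* \leq 1$) yields $\OPT - f(\biy) \leq \epsilon d \leq \epsilon\OPT$, so the returned $\bix$ is already a $(1-\epsilon)$-approximation. Otherwise every call of $\textsf{GreedyKnapsack}$ has a first failed trial at some index $L$. I would then pick the specific $\bix_0 \in \mathcal{X}$ guaranteed by Lemma~\ref{lem:partial-enumeration} and plug $f(k_L \bichi_{e_L} \mid \bix_{L-1}) \leq f(\bix_0)/[3(1-\epsilon)]$ into Lemma~\ref{lem:greedy-last}, obtaining
\[
  f(\biy) \geq \Bigl(1-\tfrac{1}{e}-O(\epsilon)\Bigr)\OPT + f(\bix_0)\Bigl(\tfrac{1}{e} - \tfrac{1}{3(1-\epsilon)}\Bigr).
\]
The hypothesis $\epsilon < 1 - e/3$ is equivalent to $3(1-\epsilon) > e$, which makes the coefficient of $f(\bix_0)$ non-negative, so the last term can be dropped. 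Because Algorithm~\ref{alg:knapsack-DR} returns the maximum over $\mathcal{G}$, the final $\bix$ inherits the $(1-1/e-O(\epsilon))$-approximation.

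For the running time I would analyse the two phases separately. $\textsf{PartialEnumeration}$ iterates over $O(n^3)$ ordered tuples of length at most three; each tuple triggers three sequential calls to $\textsf{IncreaseSupport}$, and each call multiplies the size of the working set by at most the length $O(\log\tau/\epsilon)$ of the geometric grid on $h$, while spending $O(\log\|\bic\|_\infty)$ per binary search over $[0,\bic(e)]$. Tracking the cost across the three nested levels and summing over all tuples yields the first bracketed term $\tfrac{n^3}{\epsilon^3}\log^3\tau\log^3\|\bic\|_\infty$. Each of the $O(n^3(\log\tau/\epsilon)^3)$ candidates in $\mathcal{X}$ is then fed into $\textsf{GreedyKnapsack}$, whose outer loop performs $O(\tfrac{1}{\epsilon}\log\tfrac{1}{\epsilon\wmin})$ threshold iterations with $n$ binary searches of cost $O(\log\|\bic\|_\infty)$ each; this yields the second bracketed term $\tfrac{n^4}{\epsilon^4}\log^3\tau\log\|\bic\|_\infty\log\tfrac{1}{\epsilon\wmin}$, and the sum matches $\compknap$.

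The main obstacle I anticipate is the bookkeeping inside $\textsf{PartialEnumeration}$: verifying that the range of the geometric grid in each call of $\textsf{IncreaseSupport}$ remains bounded by $\tau$ even after conditioning on the current $\biy$, and tracking the multiplicative growth of the working set across three nested calls so that both bracketed summands of $\compknap$ are charged correctly. The structural step that makes $1/e - 1/[3(1-\epsilon)]$ non-negative when $\epsilon < 1 - e/3$ is essential but straightforward; the subtler point is confirming that the $\bix_0$ produced by Lemma~\ref{lem:partial-enumeration} actually belongs to the enumerated set $\mathcal{X}$, which follows from unwinding the construction of $e_1^*, e_2^*, e_3^*$ in its proof and matching it against the ordered tuples processed by $\textsf{PartialEnumeration}$.
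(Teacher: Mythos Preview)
Your proposal is correct and follows essentially the same route as the paper: combine Lemma~\ref{lem:greedy-last} with Lemma~\ref{lem:partial-enumeration} for the chosen $\bix_0$, use $\epsilon<1-e/3$ to drop the $f(\bix_0)$ term, and bound the running time by counting $O(n^3(\log\tau/\epsilon)^3)$ initial solutions each fed to a $O(\tfrac{n}{\epsilon}\log\|\bic\|_\infty\log\tfrac{1}{\epsilon\wmin})$ greedy call. Your explicit treatment of the ``no failed trial'' case and the bookkeeping on the geometric grid inside \textsf{IncreaseSupport} are details the paper leaves implicit, but they add no new ideas.
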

\begin{proof}
    Let $\bix_0$ be the element in the output of \textsf{PartialEnumeration} described in Lemma~\ref{lem:partial-enumeration} and $\bix_{L-1}$ be the corresponding variable for \textsf{GreedyKnapsack} with the initial solution $\bix_0$.
  By Lemmas~\ref{lem:greedy-last} and \ref{lem:partial-enumeration}, 
  \begin{align*}
    f(\bix_{L-1}) & \geq \Bigl(1-\frac{1}{e}-O(\epsilon)\Bigr)\OPT + \Bigl(\frac{1}{e} - \frac{1}{3(1-\epsilon)}\Bigr)f(\biy_0) \\
    & \geq \Bigl(1-\frac{1}{e}-O(\epsilon)\Bigr)\OPT.
  \end{align*}

  For the running time, \textsf{PartialEnumeration} finds
  $O(\frac{n^3}{\epsilon^3} \log^3 \tau)$
  initial solutions in
  $O(\frac{n^3}{\epsilon^3}\log^3\|\bic\|_\infty \log^3\tau)$ time.
  For each initial solution, \textsf{GreedyKnapsack} takes $O(\frac{n}{\epsilon}\log\|\bic\|_\infty \log \frac{1}{\epsilon\wmin})$ time.
  Thus the total running time of Algorithm~\ref{alg:knapsack-DR} is as claimed.
  \qed
\end{proof}

\bibliographystyle{spmpsci}      
\bibliography{sfm}

\end{document}